\documentclass[11pt]{scrarticle}

\usepackage[natbibapa]{apacite}
\usepackage[utf8]{inputenc}
\usepackage{bm}
\usepackage{tabu, booktabs}
\usepackage[pdfencoding=auto, psdextra]{hyperref}
\hypersetup{
    colorlinks,
    linkcolor={blue!70!black},
    citecolor={blue!70!black},
    urlcolor={blue!70!black}
}

\IfFileExists{xurl.sty}{%
    \usepackage{xurl}
}{%
    \usepackage{url}%
}

\usepackage{environ}

\AtBeginDocument{%
    \providecommand{\doi}[1]{#1}%
    \renewcommand{\doi}[1]{\href{https://doi.org/#1}{\nolinkurl{https://doi.org/#1}}}%
}

\makeatletter
\newif\ifttm@pending@url
\newcommand{\ttm@pending@url@date}{}
\newcommand{\ttm@pending@url@body}{}
\newcommand{\ttm@print@pending@url}{%
    \ifttm@pending@url
        \ifx\ttm@pending@url@date\@empty
            \BRetrievedFrom
        \else
            \BRetrieved{\ttm@pending@url@date}%
        \fi
        \ttm@pending@url@body
        \global\ttm@pending@urlfalse
    \fi
}
\RenewEnviron{APACrefURL}[1][]{%
    \gdef\ttm@pending@url@date{#1}%
    \global\let\ttm@pending@url@body\BODY
    \global\ttm@pending@urltrue
}
\RenewEnviron{APACrefDOI}{%
    \global\ttm@pending@urlfalse
    \BODY
}
\AtBeginDocument{%
    \let\ttm@original@PrintBackRefs\PrintBackRefs
    \renewcommand{\PrintBackRefs}[1]{%
        \ttm@print@pending@url
        \ttm@original@PrintBackRefs{##1}%
    }%
    \let\ttm@original@endthebibliography\endthebibliography
    \renewcommand{\endthebibliography}{%
        \ttm@print@pending@url
        \ttm@original@endthebibliography
    }%
}
\makeatother

\usepackage{csquotes}
\usepackage{authblk} 
\usepackage{setspace}
\usepackage[
    font=small,
    labelfont={bf,sf},
    labelsep=quad,
    format=plain,
]{caption}

\usepackage[left=1in, top=1in, right=1in, bottom=1in]{geometry}
\usepackage{graphicx,bm,colonequals,amsmath,amssymb,amsthm,url,xcolor,bbm}
\usepackage{array,tabularx,multirow, booktabs, makecell}

\usepackage{enumitem}
\usepackage{placeins} 
\usepackage{gensymb} 

\graphicspath{{plots/}}

\usepackage{xr}
\makeatletter
\newcommand*{\addFileDependency}[1]{
    \typeout{(#1)}
    \@addtofilelist{#1}
    \IfFileExists{#1}{}{\typeout{No file #1.}}
}
\makeatother

\usepackage{mathtools}

\usepackage[ruled,vlined]{algorithm2e}
\SetKwInput{KwInput}{Input}
\usepackage{algorithmic}
\usepackage{array,framed}
\usepackage{float}

\makeatletter
\renewcommand{\paragraph}{%
    \@startsection{paragraph}{4}%
    {\z@}{2ex \@plus 1ex \@minus .2ex}{-1em}%
    {\normalfont\normalsize\bfseries}%
}
\makeatother

\DeclareMathAlphabet\mathbfcal{OMS}{cmsy}{b}{n}

\newcommand{\keywords}[1]{%
    \vspace{1em}
    \noindent\textbf{Keywords: }#1
}

\usepackage{tabularx}
\usepackage{array}

\newcolumntype{L}{>{\raggedright\arraybackslash}X} 
\newcolumntype{C}[1]{>{\centering\arraybackslash}p{#1}} 
\newcolumntype{K}[1]{>{\raggedright\arraybackslash}p{#1}} 

\newcommand{\bss}{\boldsymbol{s}}
\newcommand{\bst}{\boldsymbol{t}}

\newcommand{\bsy}{\boldsymbol{y}}
\newcommand{\bsz}{\boldsymbol{z}}

\newcommand{\bsQ}{\boldsymbol{Q}}

\newcommand{\bfu}{\mathbf{u}}
\newcommand{\bfv}{\mathbf{v}}

\newcommand{\bfB}{\mathbf{B}}

\newcommand{\bfI}{\mathbf{I}}

\newcommand{\bfK}{\mathbf{K}}
\newcommand{\bfL}{\mathbf{L}}

\newcommand{\bfS}{\mathbf{S}}

\newcommand{\bfU}{\mathbf{U}}

\newcommand{\bfW}{\mathbf{W}}

\newcommand{\bfY}{\mathbf{Y}}
\newcommand{\bfZ}{\mathbf{Z}}

\newcommand{\sfT}{\mathsf{T}}

\newcommand{\bsbeta}{\boldsymbol{\beta}}
\newcommand{\bsgamma}{\boldsymbol{\gamma}}

\newcommand{\bszeta}{\boldsymbol{\zeta}}

\newcommand{\bstheta}{\boldsymbol{\theta}}

\newcommand{\bfone}{\mathbf{1}}
\newcommand{\bfzero}{\mathbf{0}}

\newcommand{\calG}{\mathcal{G}}
\newcommand{\calH}{\mathcal{H}}
\newcommand{\calI}{\mathcal{I}}

\newcommand{\calK}{\mathcal{K}}

\newcommand{\calN}{\mathcal{N}}

\newcommand{\calO}{\mathcal{O}}
\newcommand{\calP}{\mathcal{P}}

\newcommand{\calS}{\mathcal{S}}
\newcommand{\calT}{\mathcal{T}}

\newcommand{\bbE}{\mathbb{E}}

\newcommand{\bbP}{\mathbb{P}}

\newcommand{\bbR}{\mathbb{R}}

\newcommand{\Ndim}{{L}}
\newcommand{\Nobs}{{N}}
\newcommand{\Ntrain}{{N_{\text{train}}}}

\usepackage{xparse}

\NewDocumentCommand{\Ndime}{o}{%
  L\IfValueT{#1}{^{#1}}%
}
\NewDocumentCommand{\Nobse}{o}{%
  N\IfValueT{#1}{^{#1}}%
}

\newtheorem{theorem}{Theorem}
\newtheorem{proposition}{Proposition}
\newtheorem{lemma}{Lemma}

\DeclareMathOperator*{\argmax}{arg\,max\ }

\title{%
\vspace{-1cm}
Data-Efficient Generative Modeling of Non-Gaussian Global Climate Fields via Scalable Composite Transformations}
\date{}

\author[1]{Johannes Brachem\thanks{\texttt{brachem@uni-goettingen.de}}}
\author[2]{Paul F.V. Wiemann}
\author[3]{Matthias Katzfuss}

\affil[1]{Chair of Statistics, University of Göttingen, Germany}
\affil[2]{Department of Statistics, The Ohio State University, USA}
\affil[3]{Department of Statistics, University of Wisconsin--Madison, USA}

\begin{document}

\maketitle
\begin{abstract}
    \begin{center}
        \textbf{Abstract}
        \vspace{0.5em}
    \end{center}
    \noindent
    Quantifying uncertainty in climate-model output requires characterizing internal variability, often through large ensembles of physical climate-model runs. Since each additional ensemble member is computationally expensive, only limited numbers of replicated fields are typically available under a fixed model configuration and forcing scenario. We propose a data-efficient stochastic generator for the internal variability of global climate fields, specifically designed to overcome these sample-size constraints. The framework targets the distribution of climate-model output under fixed forcing and model physics. Inspired by copula modeling, our approach constructs a highly expressive joint distribution via a composite transformation to a multivariate standard normal space. We combine a nonparametric Bayesian transport map for spatial dependence modeling with flexible, spatially varying marginal models, essential for capturing non-Gaussian behavior and heavy-tailed extremes. These marginals are defined by a parametric model followed by a semi-parametric B-spline correction to capture complex distributional features. The marginal parameters are spatially smoothed using Gaussian-process priors with low-rank approximations, rendering the computational cost linear in the spatial dimension. When applied to global log-precipitation-rate fields under fixed forcing at more than 50,000 grid locations, our stochastic surrogate achieves high fidelity, accurately quantifying the climate distribution's spatial dependence and marginal characteristics, including the tails. Using only 10 training fields, it outperforms a state-of-the-art competitor trained on 80 fields, effectively octupling the computational budget for climate research. We provide a Python implementation at \url{https://github.com/jobrachem/ppptm}.

\end{abstract}

\keywords{Emulating internal climate variability, Gaussian processes, Stochastic surrogate, Non-Gaussian spatial field, Transformation model}

\clearpage

\section{Introduction}
\label{sec:introduction}

\subsection{Background and motivation}
\label{sec:intro-background}

Earth system models (ESMs) are central tools for studying the climate system and for assessing the risks associated with climate variability and change. A single ESM run, however, represents only one possible realization of the climate system. Even under a fixed model configuration and forcing scenario, small perturbations to the initial state can lead to different weather sequences, regional trends, and extremes. This internal variability is not merely noise or model error, but an important part of the climate distribution; it can mask or amplify forced changes over policy-relevant periods and affects the probability of spatially coherent extremes. Characterizing it requires ensembles of simulations. Their cost, however, is substantial: each additional ensemble member requires running the full physical model and storing another high-dimensional spatio-temporal output. This creates a strong motivation for statistical surrogates that can reproduce aspects of the distribution of climate-model output at a much lower computational and storage cost.

\subsection{Emulators, stochastic generators, and related surrogates}
\label{sec:intro-emulators}

The term emulator is used broadly for computationally efficient substitutes for climate-model output \citep{Tebaldi2025-EmulatorsClimateModel}. A useful distinction is between surrogates that learn how climate-model output changes across inputs, such as emissions scenarios, forcing levels, physical parameters, or boundary conditions, and stochastic generators that target the distribution of output under a fixed model configuration and forcing scenario. The latter aim to sample additional realizations of internal variability after observing only a limited ensemble of expensive model runs \citep{Castruccio2019-ReproducingInternalVariability,Edwards2019-MultivariateGlobalSpatiotemporal,Hu2021-ApproximatingInternalVariability}. Our proposed model belongs to this second class: it is a data-efficient stochastic generator for high-dimensional climate fields.

Prior stochastic-generator work has reproduced regional temperature-trend variability from small ensembles \citep{Castruccio2019-ReproducingInternalVariability}, developed multivariate global spatio-temporal generators for climate ensembles \citep{Edwards2019-MultivariateGlobalSpatiotemporal}, and used spatial stochastic generators for bias-corrected temperature projections \citep{Hu2021-ApproximatingInternalVariability}. Related climate-surrogate work has also connected stochastic modeling to storage and computational savings: \cite{Huang2023-SavingStorageClimate} study model-based stochastic approximations as a route to saving computational time and storage in climate ensembles, while \cite{Song2024-EfficientStochasticGenerators} propose spherical-harmonic stochastic generators that rapidly emulate high-resolution CESM2-LENS2 temperature simulations with reduced computational and storage costs. These developments connect to the longer stochastic-weather-generator tradition \citep{Ailliot2015-StochasticWeatherGenerators}, but target large climate-model ensembles and internal variability rather than site-level observed weather sequences.

Machine-learning emulators form another rapidly developing branch of the landscape. Data-driven weather-forecasting systems such as AIFS \citep{Lang2024-AIFSDataDrivenForecasting} and climate emulators such as ACE2 \citep{Watt-Meyer2025-ACE2AccuratelyLearning} have shown strong performance in high-frequency spatio-temporal prediction and in simulations that remain stable over many repeated forecast steps. Diffusion and related generative models provide a particularly relevant modern comparison, with applications to forecast-ensemble generation \citep{Li2024-GenerativeEmulationWeathera}, continuous probabilistic weather-forecast trajectories \citep{Andrae2025-ContinuousEnsembleWeather}, and probabilistic downscaling of ESM fields \citep{Hess2025-FastScaleadaptiveUncertaintyaware}. More broadly, this literature includes normalizing flows \citep{Kobyzev2021-NormalizingFlowsIntroduction}, spatial generative models \citep{Ng2022-SphericalPoissonPoint,Ng2023-MixtureModelingNormalizing}, and GAN-based climate-model emulators \citep{Ayala2021-LooselyConditionedEmulation,Besombes2021-ProducingRealisticClimate}.

These machine-learning models address important but partly different targets, and are typically trained on extensive high-frequency weather records, forecast-ensemble collections, or high-resolution paired datasets. Many learn short-step forecasting dynamics or conditional mappings. By contrast, we estimate the distribution of a high-dimensional spatial field from a small number of replicated climate-model fields representing a specified target distribution (under fixed forcing in our application), without learning a time-evolution model. This is useful when a conventional climate model has already been run for a target climate state or scenario, such as a future time slice, but only a limited ensemble is available: the fitted stochastic generator can then produce additional plausible spatial fields for distributional summaries, uncertainty quantification, and tail-risk analyses around that target. Thus, our approach complements machine-learning emulators, especially when only tens of expensive ensemble replicates are available and a statistically explicit distribution over spatial fields is desired.

\subsection{Modeling target and data regime}
\label{sec:intro-target}

Formally, we consider a non-Gaussian spatial field $\bsy = (y_1, \dots, y_{\Ndim})^\sfT$ observed through training samples $\bsy_1, \dots, \bsy_{\Nobs}$ from a climate model, where $\Ndim$ may be very large and $\Nobs$ is small. In our application, $\Nobs \leq 98$ global log-precipitation-rate fields from the Community Earth System Model (CESM) Large Ensemble Project \citep{Kay2015-CommunityEarthSystem} are observed on a roughly 1\degree\ longitude-latitude grid with $\Ndim = 288 \times 192 = 55{,}296$ locations before trimming duplicate polar grid points. We use daily averages of July 1 data from 98 consecutive CESM model years starting with model year 402 in the preindustrial control simulation. Over this period, the forcing is fixed and no systematic climate-change signal is expected, so we treat the fields as approximately iid replicates from a common distribution and do not model their temporal ordering, temporal autocorrelation, seasonal dynamics, or response to time-varying covariates.

This restriction is intentional and defines the contribution of our paper. We focus on the difficult problem of learning a flexible distribution for a massive non-Gaussian spatial field when the number of replicates is on the order of $10$ to $100$. Such fields can exhibit skewness, heavy tails, local multimodality, spatially varying marginal behavior, and complex dependence. Standard Gaussian-process models and many related spatial methods provide valuable tools for high-dimensional spatial modeling, but they often rely on Gaussianity or on parametric covariance structure \citep{Banerjee2014-HierarchicalModelingAnalysis,Risser2016-ReviewNonstationarySpatial,Choi2013-NonparametricEstimationSpatial,Porcu2021-NonparametricBayesianModeling,Nychka2018-ModelingEmulationNonstationary,Wiens2020-ModelingSpatialData}. Copula and transformation approaches relax marginal Gaussianity and can represent richer dependence \citep{Krupskii2018-FactorCopulaModels,Mondal2024-NonStationaryFactor,Klein2022-MultivariateConditionalTransformation,Herp2024-GraphicalTransformationModels}, but existing high-dimensional implementations typically do not scale to global climate grids with more than $50{,}000$ locations and very small sample sizes. Deep generative models are highly expressive, but their training requirements and tuning costs can be substantial in precisely this low-replicate regime.

\begin{figure}[bt]
    \centering
    \includegraphics[trim=3.5mm 2mm 0 0, clip,width=.96\linewidth]{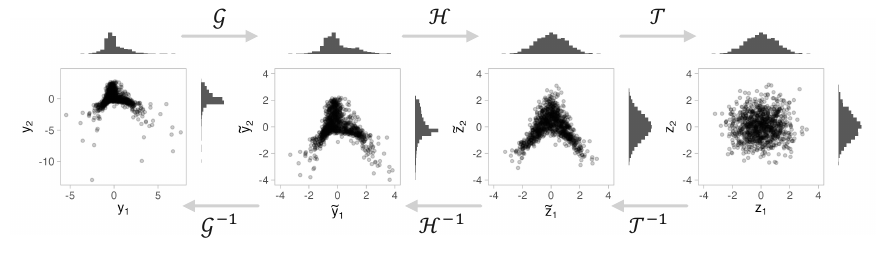} \hfill  ~~\\
    \includegraphics[width=\linewidth]{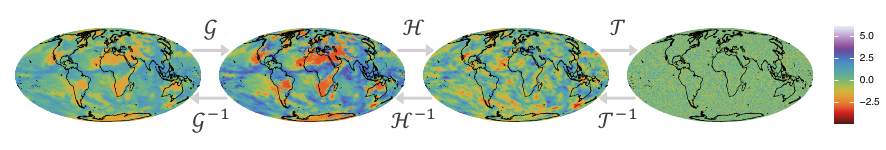}
    \caption{Top: Conceptual illustration of the three-part composite transformation mapping a bivariate non-Gaussian distribution to a standard Gaussian distribution. The parametric marginal model $\calG$ (here, a location-scale $t$-distribution) standardizes the margins and brings the tails closer to the bulk of the distribution. The semi-parametric model $\calH$ corrects residual deviations from marginal Gaussianity. Finally, the Bayesian transport map $\calT$ captures and removes the nonlinear dependence, achieving joint Gaussianity.
        Bottom: Analogous transformation of a global log-precipitation-rate field produced by a climate model (see main text). Due to the joint estimation of $\calG$ and $\calH$ here, the behavior of $\calG$ cannot be easily interpreted separately from $\calH$.
    }
    \label{fig:model-summary}
\end{figure}

\subsection{Scalable composite transformations}
\label{sec:intro-sct}

We propose scalable composite transformations (SCT) that jointly map the observed field to a multivariate standard Gaussian reference distribution through an invertible composition,
$\calT \circ \calH \circ \calG$, as illustrated in Figure~\ref{fig:model-summary}.
The first two components act elementwise on the marginal distributions. A parametric layer $\calG$ provides a stable baseline transformation and controls tail behavior through a manually selected parametric distributional family. A semi-parametric layer $\calH$ then corrects residual marginal misfit in the bulk of the distribution using a regularized monotone B-spline transformation whose parameters are smoothed across space. The final component $\calT$ is the scalable Bayesian transport map of \cite{Katzfuss2023-ScalableBayesianTransport}, which acts as a copula-like dependence model for the marginally transformed field.

The marginal layers allow the model to represent strongly non-Gaussian local behavior without forcing the transport map to absorb marginal misspecification, while the transport map captures nonlinear spatial dependence through a sequence of conditional regressions ordered from coarse to fine spatial scales. In our empirical comparisons, this structure yields accurate predictive distributions for global log-precipitation-rate fields even with far fewer training fields than competing transport-map baselines. The resulting stochastic generator can be fit and used to generate samples on an ordinary laptop, providing a practical way to generate additional plausible spatial fields for distributional summaries, uncertainty quantification, and tail-risk analyses tied to the learned target distribution.
The remainder of this article is organized as follows. Section~\ref{sec:model} introduces the composite transformation model and its three components. Section~\ref{sec:inference} describes the two-stage estimation strategy. Section~\ref{sec:application} applies the method to CESM log-precipitation-rate fields on regional and global grids and compares it to existing scalable transport-map baselines. Section~\ref{sec:conclusion} concludes with limitations and future directions. Additional proofs, computational details, compressed-storage discussion, and diagnostic figures are provided in the Appendix. An implementation is available as a Python library at \url{https://github.com/jobrachem/ppptm}, building on Liesel \citep{Riebl2023-LieselProbabilisticProgramming} and a Python implementation of Bayesian transport maps (\url{https://github.com/katzfuss-group/batram}).

\section{Model}
\label{sec:model}

\subsection{Overview}

Consider a continuous random vector $\bsy = (y_1, \dots, y_\Ndim)^\sfT$, for example representing a spatial field at $\Ndim$ locations or grid points.
We model the joint distribution of $\bsy$ using a three-part composition $(\calT \circ \calH \circ \calG)$ mapping to a multivariate standard Gaussian space:
\begin{equation}
    \label{eq:composite-map}
    (\calT \circ \calH \circ \calG)(\bsy) \sim \calN_{\Ndim}(\bfzero, \bfI),
\end{equation}
where $\calH$ and $\calG$ are spatially smoothed element-wise operators ensuring that marginally $(\calH_i\circ\calG_i)(y_i) \sim \calN(0, 1)$ for all $i = 1, \dots, \Ndim$.
For $\calG$, we construct a preliminary element-wise transformation based on a manually chosen parametric distributional family.
For $\calH$, we build on work on conditional transformation models \citep{Hothorn2018-MostLikelyTransformations} and penalized transformation models \citep{Brachem2025-BayesianPenalizedTransformation} to construct a semi-parametric second layer of element-wise transformations.
For $\calT$, we employ the triangular scalable Bayesian transport map proposed by \cite{Katzfuss2023-ScalableBayesianTransport}. Details on all parts of the composition are provided in the following sections.
We denote intermediate quantities as:
\begin{align*}
    \tilde \bsy & = \calG(\bsy)                                                &                              \\
    \tilde \bsz & = \calH(\tilde \bsy) = (\calH \circ \calG)(\bsy)             & \text{(marginally Gaussian)} \\
    \bsz        & = \calT(\tilde \bsz) = (\calT \circ \calH \circ \calG)(\bsy) & \text{(jointly Gaussian)}.
\end{align*}
Given the triangular structure of $\calT = \big(\calT_1(\tilde z_1),\calT_2(\tilde \bsz_{1:2}),\ldots,\calT_{\Ndim}(\tilde \bsz_{1:\Ndim})\big)^\sfT$ and the element-wise nature of $\calG$ and $\calH$, the joint density of $\bsy$
factorizes as:
\begin{equation}
    \label{eq:joint-density}
    p(\bsy) = \prod_{i=1}^\Ndim \left(
    \phi\bigl( \calT_i(\tilde \bsz_{1:i})\bigr)
    \times
    \left|
    \frac{\partial\, \calT_i(\tilde \bsz_{1:i})}{\partial \tilde z_i}
    \right|
    \times
    \left|
    \frac{\partial\, \calH_i(\tilde y_i)}{\partial \tilde y_i}
    \right|
    \times
    \left|
    \frac{\partial\, \calG_i(y_i)}{\partial y_i}
    \right|
    \right),
\end{equation}
where $\phi$ denotes the standard Gaussian probability density function (PDF). All three parts of the composition are invertible, so the model can be used to generate new samples by drawing $\bsz^* \sim \calN(\bfzero, \bfI)$ and computing $\bsy^* = (\calG^{-1} \circ \calH^{-1} \circ \calT^{-1})(\bsz^*)$. Closed-form inverses exist for $\calG$ and $\calT$ (see Section~\ref{sec:G} and Section~\ref{sec:transport-map-inference}), while $\calH^{-1}$ is approximated numerically.

\subsection{Copula-inspired separation of marginal and dependence models}
Our approach is grounded in the separation of marginal and joint distributions established by Sklar's theorem (\citeyear{Sklar1959-FonctionsRepartitionDimensions}), the foundation of copula models. However, we (equivalently) map to a standard Gaussian space instead of the conventional standard uniform space. This choice simplifies the development of $\calH$, offering direct control over marginal tail behavior, and it aligns with the scalable Bayesian transport map $\calT$, which targets a standard Gaussian.

In our model, the marginal distributions are characterized by the composition of $\calH$ and $\calG$. For any continuous univariate random variable $y_i$, there exists a unique strictly monotonically increasing transformation to the standard Gaussian space, $h_i = \Phi^{-1} \circ F_i$, such that $F_i(y_i) = \Phi(h_i(y_i))$ and $h_i(y_i) \sim \calN(0, 1)$, where $\Phi$ is the standard Gaussian CDF and $F_i$ is $y_i$'s CDF \citep[see Corollary 1 in][]{Hothorn2018-MostLikelyTransformations}.
We model these transformations as $h_i = \calH_i \circ \calG_i$, where $\calG_i$ is a parametric base model (see Section~\ref{sec:G}) and $\calH_i$ is a semi-parametric B-spline correction (see Section~\ref{sec:H}). Jointly applying $\calG_i$ and $\calH_i$ results in a marginally Gaussianized vector $\tilde \bsz = (\calH \circ \calG)(\bsy)$.

The dependence structure of $\tilde \bsz$ is captured by a multivariate transformation to a multivariate standard Gaussian space, $\calT: \bbR^\Ndim \rightarrow \bbR^\Ndim$, such that $\bbP(\tilde \bsz \leq \bst) = \Phi_\Ndim(\calT(\bst))$ and $\calT(\tilde \bsz) \sim \calN(\bfzero, \bfI)$, where $\Phi_\Ndim$ denotes the $\Ndim$-dimensional standard Gaussian CDF.
Without loss of generality, we can assume this transformation $\calT$ to be lower triangular, where each $\calT_i(\tilde \bsz_{1:i})$ is increasing in $\tilde z_i$ \citep{Rosenblatt1952-RemarksMultivariateTransformationa}. We adopt the parameterization $\calT_i(\tilde \bsz_{1:i}) = (\tilde z_i - f_i(\tilde \bsz_{<i})) / d_i$ with $d_i \in \bbR^+$, $f_i: \bbR^{i-1} \rightarrow \bbR$ for $i = 2, \dots, \Ndim$, and $f_i(\tilde \bsz_{<i}) \equiv 0$ for $i=1$. The triangular map $\calT$ takes the role of a copula, describing the dependence structure of the marginally transformed data $\tilde \bsz$. $\calT$ can also be understood as a series of nonlinear regressions $\tilde z_i = f_i(\tilde \bsz_{<i}) + \epsilon_i$, with $\epsilon_i \sim \calN(0, d_i^2)$.
Crucially, the triangular structure implies that the first element of $\tilde \bsz$ is modeled as Gaussian, and subsequent elements as conditionally Gaussian. For variables appearing early in the ordering sequence, the conditioning set provides limited predictive information, effectively reducing the conditional Gaussian assumption to an unconditional one. This necessitates the accurate marginal Gaussianization provided by $\calG$ and $\calH$, as the transport map cannot correct marginal deviations especially for these early-ordered variables.

\subsection{Details on the parametric transformations \texorpdfstring{$\calG$}{G}}
\label{sec:G}

The parametric transformations $\calG_i$ are defined as $\tilde y_i = \calG_i(y_i) = \Phi^{-1} \bigl( F_i(y_i | \bszeta_i) \bigr)$,
where $F_i$ is a user-specified parametric CDF governed by unknown parameters $\bszeta_i$, and $\Phi^{-1}$ is the standard Gaussian quantile function. We assume that $F_i$ admits a density. This parametric transformation is strictly monotonically increasing by construction. By the chain rule, the derivative is
$\frac{\partial}{\partial y_i} \calG_i(y_i) = F'_i(y_i | \bszeta_i)\phi(\calG_i(y_i))^{-1}$,
where $\phi$ is the standard Gaussian density and $F'_i$ is the density corresponding to $F_i$.
The inverse parametric transformation is
$y_i = \calG_i^{-1}(\tilde y_i) = F_i^{-1}(\Phi(\tilde y_i) | \bszeta_i)$.
As a concrete example, if $F_i$ is Gaussian with mean $\mu_i$ and variance $\sigma_i^2$, then  $\calG_i(y_i) = (y_i - \mu_i)/\sigma_i$ simplifies to a location-scale transformation.
For simplicity, we assume the same distributional family $F$ is chosen for all locations $i = 1, \ldots, \Ndim$.
Note that this common-family assumption applies only to the parametric base layer $\calG$: the full CDF implied by the marginal model is $\Phi(\calH_i[\Phi^{-1}(F(y_i \mid \bszeta_i))])$, so the location-specific semi-parametric correction $\calH_i$ can represent spatially varying departures from $F$ in the well-supported part of the distribution, while $F$ remains important as the regularization target and tail-extrapolation model.

When the number of training replicates $\Nobs$ is small, estimating $\bszeta_i$ independently at each spatial location is difficult. However, because geophysical processes typically vary smoothly, it is reasonable to assume spatial smoothness for the distributional parameters. To regularize estimation and share information across space, we place Gaussian-process (GP) priors over space on the parameters $\bszeta_i$. Specifically, let $\bszeta_i = [\zeta_{i,1}, \dots, \zeta_{i,P}]$, where $P$ denotes the number of parameters modeled in the parametric marginal distribution $F_i$. We define independent GP priors $\zeta_{i,p} \sim \calG \calP (0, k_{\zeta_p}(\cdot, \cdot'))$, for each $p=1, \dots, P$, where $k_{\zeta_p}$ denotes an appropriate correlation function. Each process and associated kernel $k_{\zeta_p}$ is governed by hyperparameters $\bstheta^\calG_p$, typically comprising an amplitude $\tau^\calG_p$ and a length scale $\ell^\calG_p$, such that $\bstheta^\calG_p = (\tau^\calG_p, \ell^\calG_p)^\sfT$. Note that we assume all elements of $\bszeta_i$ to be defined on the full real line. Parameters with restricted domains are therefore treated on an appropriately transformed scale.

\subsection{Details on the semi-parametric transformations \texorpdfstring{$\calH$}{H}}
\label{sec:H}
We design the semi-parametric transformations $\calH_i$ to provide flexible corrections to the parametric base model $\calG$ within the bulk of the data distribution, while seamlessly reverting to the parametric tails for robust extrapolation.
This setup deliberately shifts much of the flexible marginal modeling from the choice of the parametric family $F_i$ to the regularized correction $\calH_i$: in many applications, one can use a relatively simple and stable base family and let $\calH_i$ capture departures from that family where these departures are supported by the data.

We parameterize $\calH_i: \bbR \rightarrow \bbR$ as a strictly monotonically increasing B-spline on the interval $[k_1, k_m]$, given by
\begin{equation}
    \label{eq:onion-spline}
    \calH_i(\tilde y_i) = \sum_{j=1}^J B_j(\tilde y_i)
    \left(
    \gamma_{1,i} + \sum_{\ell = 2}^j \exp(\gamma_{\ell,i})
    \right), \qquad \text{if} \quad \tilde y_i \in [k_1, k_m],
\end{equation}
and the identity function if $\tilde y_i \in (-\infty, k_1) \cup (k_m, \infty)$. The functions $B_j$ are cubic B-spline bases defined on a sequence of $J+4$ knots
$k_{-2} < k_{-1} < \dots < k_{m+3}$, where $k_1, \dots, k_m$ are the \textit{interior knots} with $m = J - 2$. The knots are constant across all dimensions $i = 1, \dots, \Ndim$. The knot sequence is defined as an equidistant grid with $k_{j+1}-k_j = k$ denoting the constant difference between two adjacent knots, and they are set up by choosing $k_2=a$ and $k_{m-1}=b$ using Theorem~\ref{thm:tail-probabilities} (see below). The vector $\bsgamma_i \in \calK \subset \bbR^{J}$ contains an intercept parameter $\gamma_{1,i}$ and log-increments in the spline parameters, $\gamma_{2,i}, \dots, \gamma_{J,i}$, ensuring monotonicity.
To enforce the identity behavior outside $[a,b]$ and ensure $C^2$ continuity at the boundaries, we impose specific constraints on the coefficients.
To match the slope and level of the identity function, we fix the first coefficient to  $\gamma_{1,i} = k_0$ for all locations $i$, and we fix three $\gamma_{j,i} = \log(k)$ for $j \in \{2,3,4\} \cup \{J-2, J-1, J\}$ on either side of the knot basis. The remaining log-increments are parameterized via unconstrained parameters $\bsbeta_i \in \bbR^D$ ($D = J-7$):
\begin{equation}
    \label{eq:onion-coefficients}
    \gamma_{j,i} = \beta_{i,j-4} -
    \log \left(
    \frac{\sum_{d=1}^D \exp(\beta_{i,d})}{(m-5)k}
    \right), \quad \text{for } j = 5, \dots, J-3.
\end{equation}
This normalization ensures that the spline traverses the vertical distance between the fixed endpoints exactly, preserving the identity mapping constraints. To emphasize the dependence on the unconstrained parameters $\bsbeta_i$, we write $\calH_i(\cdot | \bsbeta_i)$ in the following.
Figure~\ref{fig:onion-knots} provides an overview of the parameters and knots in our modified monotonically increasing spline.
Panel~a) in Figure~\ref{fig:onion-spline} illustrates the construction of $\calH_i$ using randomly sampled $\bsbeta_i$.
We now establish key theoretical properties that justify this architecture. Proofs are given in Section~A of the Appendix \citep{Brachem2026-SCT-Appendix}.

\begin{figure}[t]
    \centering
    \includegraphics[width=\linewidth]{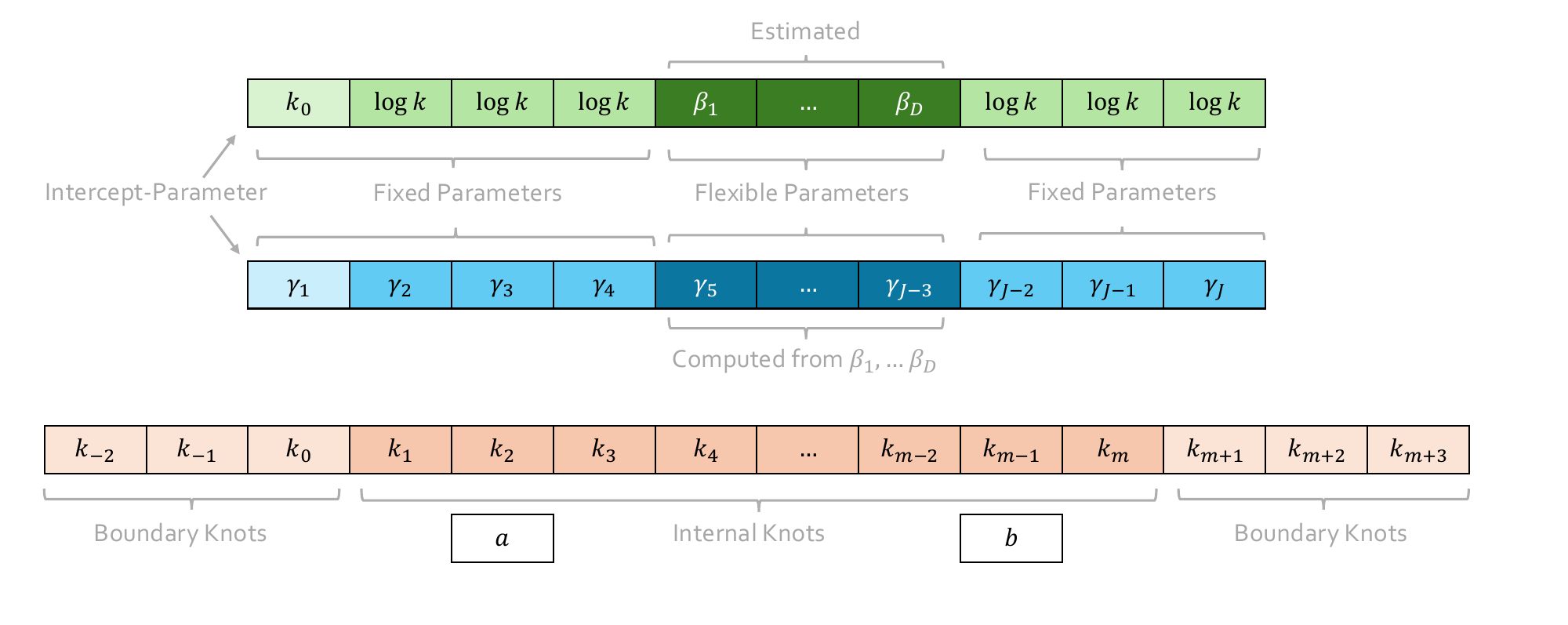}
    \caption{Illustration of how parameters and knots in our modified monotonically increasing B-spline relate to each other. The bottom panel (orange) shows the knots. The middle panel (blue) shows the spline parameters on the level of $\gamma_1, \dots, \gamma_J$, aligned with corresponding knots. The top panel (green) shows the values of the fixed $\gamma$ parameters, and aligns the freely estimated parameters $\beta_1, \dots, \beta_D$ with their counterparts $\gamma_5, \dots, \gamma_{J-3}$ (see Eq.~(\ref{eq:onion-coefficients})). Note that $k = k_{j+1}-k_j$ is the (constant) distance between two adjacent knots, and that $J =D+7$ and $m = D+5$.}
    \label{fig:onion-knots}
\end{figure}

\begin{theorem}[Reduction to base model]
    \label{thm:collapse-to-identity}
    Let $(\calH_i \circ \calG_i)(y_i) \sim \calN(0, 1)$, with $\calH_i$  as defined above
    and $\calG_i(y_i)=\Phi^{-1}(F_i(y_i))$, where $F_i: \bbR \rightarrow [0,1]$ is a continuous cumulative distribution function.
    If $\bsbeta_i = \beta_i \bfone_D$ for any constant $\beta_i \in \bbR$,
    then $\bbP(y_i \leq t) = F_i(t)$ for all $t \in \bbR$.
\end{theorem}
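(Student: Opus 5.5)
The plan is to show that when $\bsbeta_i = \beta_i \bfone_D$, the spline $\calH_i(\cdot \mid \bsbeta_i)$ is the identity on all of $\bbR$. Once that is done, $\calH_i\circ\calG_i = \calG_i$, and the claim follows directly. From $\calG_i(y_i) = \Phi^{-1}(F_i(y_i))\sim\calN(0,1)$ and monotonicity we get $\bbP(y_i\le t)=\bbP(\calG_i(y_i)\le \calG_i(t)) = \Phi(\Phi^{-1}(F_i(t))) = F_i(t)$. The second equality should use the hypothesis only through the marginal $\calN(0,1)$ law. At points where $F_i(t)\in\{0,1\}$, I would handle the edge cases by continuity of $F_i$ and by taking limits, or by using the equivalent statement $\bbP(F_i(y_i)\le F_i(t)) = F_i(t)$ together with $\Phi(\calG_i(y_i))\sim U(0,1)$. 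Outside $[k_1,k_m]$ the map $\calH_i$ is the identity by definition, so the work is on the interval.

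\textbf{Step 1: the coefficients become an arithmetic progression.} With all $\beta_{i,d}=\beta_i$, Eq.~(\ref{eq:onion-coefficients}) gives
\[
\gamma_{j,i} = \beta_i - \log\bigl(D e^{\beta_i}/((m-5)k)\bigr) = \log\bigl((m-5)k/D\bigr).
\]
Since $m=D+5$, this equals $\log k$ for $j=5,\dots,J-3$. The fixed entries $\gamma_{j,i}=\log k$ cover the remaining indices $j\in\{2,3,4,J-2,J-1,J\}$. Hence every increment equals $k$, and the spline coefficients are $\theta_j = \gamma_{1,i}+(j-1)k = k_0+(j-1)k$.

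\textbf{Step 2: identify the coefficients with knot averages.} For cubic B-splines on an equidistant knot grid, the Marsden identity (linear precision) gives $\sum_j B_j(x)\,\xi_j = x$ on the interior of the knot span. Here $\xi_j$ is the Greville abscissa, the average of the three interior knots of $B_j$, so that $\xi_j$ is the middle knot for uniform cubics. I would index the bases so that $B_1$ is supported on $[k_{-2},k_2]$ and centered at $k_0$. This matches the paper's choice $\gamma_{1,i}=k_0$ and the labelling in Figure~\ref{fig:onion-knots}. With that indexing $\xi_j = k_0+(j-1)k=\theta_j$, so $\calH_i(x)=x$ on $[k_1,k_m]$, where the partition of unity holds.

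The \emph{main obstacle} is this bookkeeping. I need to verify that the indexing of $B_1,\dots,B_J$ against the knots $k_{-2},\dots,k_{m+3}$, with $J=m+2$, really places the center of $B_j$ at $k_{j-1}$. I also need to check that $[k_1,k_m]$ lies inside the region where all relevant bases sum to one. I would check this by counting: $J+4 = m+6$ knots, and a cubic basis spans four intervals, so the centers run from $k_0$ to $k_{m+1}$, which is exactly $J$ centers. That is consistent with the fixed choice of $\gamma_{1,i}$ and with the normalization constant $(m-5)k$ in Eq.~(\ref{eq:onion-coefficients}).
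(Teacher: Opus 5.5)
Your proof is correct. Your Step~1 is exactly the paper's first step: every $\gamma_{j,i}$ collapses to $\log k$ because $m-5=D$.

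The two arguments differ in how they conclude that $\calH_i$ is the identity on $[k_1,k_m]$. The paper works with derivatives. With all increments equal to $k$, the slope representation $\sum_{j=2}^{J}B_j^{(2)}(x)\exp(\gamma_{j,i})/k$ from Lemma~\ref{lemma:unitslope-k1a} equals one on every segment. The level is then fixed by Proposition~\ref{prop:identity-tails}, which rests on the chain of Lemmas~\ref{lemma:pass-k1}--\ref{lemma:pass-ab}: passage through $k_1$, unit slope near the ends, average slope one over $[a,b]$, and passage through $a$ and $b$.

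You instead invoke Marsden's identity directly. Your indexing matches the paper's; for example, only $B_1,B_2,B_3$ are nonzero at $k_1$ in Lemma~\ref{lemma:pass-k1}. With that indexing, the control points $k_0+(j-1)k$ are exactly the Greville abscissae $k_{j-1}$, so linear precision gives $\calH_i(x)=x$ on all of $[k_1,k_m]$ at once.

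Your argument is self-contained and needs neither the imported derivative formula nor the anchoring lemmas. In fact, the paper's key identity $\sum_{j=1}^4 jB_j(k_1)=2$ in Lemma~\ref{lemma:pass-k1} is just Marsden's identity evaluated at the single point $k_1$. The paper's route is more economical within its own structure, because Proposition~\ref{prop:identity-tails} is needed anyway for Theorem~\ref{thm:tail-probabilities}.

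You also spell out the final step from $\calH_i\circ\calG_i=\calG_i$ to $\bbP(y_i\le t)=F_i(t)$, which the paper leaves implicit. If $F_i$ has flat stretches, the events $\{y_i\le t\}$ and $\{F_i(y_i)\le F_i(t)\}$ differ only within $\{F_i(y_i)=F_i(t)\}$. That set is null because $F_i(y_i)$ is uniform, so plain (non-strict) monotonicity suffices there.
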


This theorem provides a clear mechanism for model parsimony: we can encourage the model to revert to the simpler parametric family $F_i$ by shrinking the entries of $\bsbeta_i$ towards a constant value. This mechanism is the basis for our local regularization prior introduced below in Section~\ref{sec:prior-trafo}.

\begin{theorem}[Tail probabilities]
    \label{thm:tail-probabilities}
    Let $(\calH_i \circ \calG_i)(y_i) \sim \calN(0, 1)$, with $\calH_i$ as defined above and $\calG_i(y_i)=\Phi^{-1}(F_i(y_i))$, where $F_i: \bbR \rightarrow [0,1]$ is a continuous cumulative distribution function and $\Phi^{-1}$ is the standard normal quantile function. Further, let $q_a = F_i^{-1}(\Phi(a))$ and $q_b = F_i^{-1}(\Phi(b))$ be the quantiles of the distribution with CDF $F_i$ at probability levels $\Phi(a)$ and $\Phi(b)$, respectively.
    Then $\bbP(y_i \leq t) = F_i(t)$ for all $t \in (-\infty, q_a] \cup [q_b, \infty)$.
\end{theorem}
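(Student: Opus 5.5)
The argument rests on one structural fact: $\calH_i$ is the identity on $(-\infty,a]\cup[b,\infty)$, where $a=k_2$ and $b=k_{m-1}$. Granting this, the tail identity follows directly. Because $\calH_i\circ\calG_i$ is strictly increasing and $(\calH_i\circ\calG_i)(y_i)\sim\calN(0,1)$, we have
\[
\bbP(y_i\le t)=\Phi\bigl(\calH_i(\calG_i(t))\bigr).
\]
Here we use continuity of $F_i$, and where $F_i$ has flat pieces we only need monotonicity, so $\le$ is preserved. Now take $t\le q_a$. Then $\calG_i(t)=\Phi^{-1}(F_i(t))\le\Phi^{-1}(F_i(q_a))=a$, using $F_i(F_i^{-1}(p))=p$ for continuous $F_i$. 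On this region $\calH_i$ is the identity, so $\bbP(y_i\le t)=\Phi(\Phi^{-1}(F_i(t)))=F_i(t)$. The case $t\ge q_b$ is symmetric. Degenerate values $F_i(t)\in\{0,1\}$, where $\calG_i=\pm\infty$, are handled by the same formula with the usual conventions.

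The main work is establishing the structural fact. By definition, $\calH_i$ is the identity off $[k_1,k_m]$, so we must show that the spline equals the identity on $[k_1,k_2]$ and on $[k_{m-1},k_m]$. Write $\theta_j=\gamma_{1,i}+\sum_{\ell=2}^j\exp(\gamma_{\ell,i})$ for the spline coefficients. The fixed choices are $\gamma_{1,i}=k_0$ and $\exp(\gamma_{j,i})=k$ for $j=2,3,4$. These give $\theta_1,\dots,\theta_4=k_0,k_0+k,k_0+2k,k_0+3k$, which are consecutive equally spaced values.

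On a knot interval, a cubic B-spline depends on only four consecutive basis functions. For equidistant knots, if those four coefficients are affine in $j$, the spline reproduces the corresponding linear function. This is Marsden's identity / linear precision: $\sum_j B_j(x)\,\xi_j=x$ with Greville abscissae $\xi_j$, which for cubic splines on a uniform grid are the knots shifted by two steps. The indexing of the paper, $k_{-2},\dots$ with $J$ bases, must be matched so that $\xi_j=k_{j-1}$. Then coefficients $\theta_j=k_{j-1}=k_0+(j-1)k$ give exactly the identity on the first interior interval $[k_1,k_2]$, which is the interval supported by $B_1,\dots,B_4$. I would verify this index bookkeeping carefully, as it is the step most likely to go wrong.

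For the upper end, I would use the normalization in Eq.~(\ref{eq:onion-coefficients}). It gives
\[
\sum_{j=5}^{J-3}\exp(\gamma_{j,i})=(m-5)k .
\]
Hence $\theta_{J-3}=k_0+3k+(m-5)k=k_0+(m-2)k$, and the last three increments $\log k$ produce $\theta_{J-3},\dots,\theta_J$ equal to $k_{J-4},\dots,k_{J-1}$. These are again the Greville values. With $m=J-2$, the last four bases cover $[k_{m-1},k_m]$, so the spline is the identity there.

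I would finally check that the choices $a=k_2$ and $b=k_{m-1}$ match these intervals. This yields identity on $(-\infty,a]\cup[b,\infty)$, and the tail argument above applies. If the paper's intended indexing differs by one knot, the same argument works with the corresponding shift.
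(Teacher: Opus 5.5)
Your proof is correct. Its outer step, reducing to $\bbP(y_i\le t)=\Phi(\calH_i(\calG_i(t)))$ and using $\calG_i(t)\in(-\infty,a]\cup[b,\infty)$, is exactly the paper's. Where you differ is in showing that $\calH_i$ is the identity off $(a,b)$, which is the paper's Proposition~\ref{prop:identity-tails}.

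The paper chains four lemmas:
\begin{itemize}
\item $\calH_i(k_1)=k_1$, via partition of unity and the basis values at $k_1$;
\item unit slope on $[k_1,a]$ and $[b,k_m]$, since the derivative there is a quadratic spline whose active coefficients $\exp(\gamma_{j,i})/k$ all equal one;
\item an average slope of one over $[a,b]$, from closed-form segment integrals and the normalization in Eq.~(\ref{eq:onion-coefficients});
\item propagation of the value from $a$ to $b$.
\end{itemize}

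You instead apply linear precision locally. On each boundary segment the four active control points equal the Greville abscissae $\xi_j=k_{j-1}$. The normalization is precisely what places $\theta_{J-3}$ at $k_{J-4}=k_{m-2}$. This is equivalent to the paper's $\calH_i(b)=b$, because $\calH_i(k_{m-1})=\frac{1}{6}\theta_{J-3}+\frac{2}{3}\theta_{J-2}+\frac{1}{6}\theta_{J-1}=\theta_{J-3}+k$.

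Your index bookkeeping, which you flagged as the risky step, is right. $B_j$ is supported on $[k_{j-3},k_{j+1}]$, so $B_1,\dots,B_4$ are active on $[k_1,k_2]=[k_1,a]$ and $B_{J-3},\dots,B_J$ on $[k_{m-1},k_m]=[b,k_m]$.

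Your route is shorter and avoids derivative and integral computations. It also gives Theorem~\ref{thm:collapse-to-identity} immediately, since constant $\bsbeta_i$ makes every $\theta_j=k_{j-1}$. The paper's route reuses derivative results from its earlier work and makes the ``average slope one'' meaning of the normalization explicit.

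One step needs a sentence more. For a merely nondecreasing $\calG_i$, ``$\le$ is preserved'' only gives the inclusion $\{y_i\le t\}\subseteq\{\calG_i(y_i)\le\calG_i(t)\}$. The probabilities are equal because the difference of the two events lies in $\{(\calH_i\circ\calG_i)(y_i)=(\calH_i\circ\calG_i)(t)\}$, which is null for an $\calN(0,1)$ variable. With that added, your handling of flat pieces is more careful than the paper's, which simply asserts that $\calG_i$ is strictly increasing.
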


Consequently, tail behavior is determined entirely by the parametric choice $F_i$, decoupling the estimation of the bulk (where data are abundant) from the estimation of the tails (where data are scarce by definition). This is a major conceptual improvement over the transformation function used by \cite{Brachem2025-BayesianPenalizedTransformation}, where the transformation function's extrapolation behavior affects the implied tail densities.
We set the boundaries at $-a=b=4$ by default, implying that roughly $99.99\%$ of the probability mass is modeled flexibly, while the extreme $0.01\%$ relies on the robust parametric extrapolation. Other choices for $a$ and $b$ are possible if required by the application and if enough data are available.

\begin{figure}[t]
    \centering
    \includegraphics[width=\linewidth]{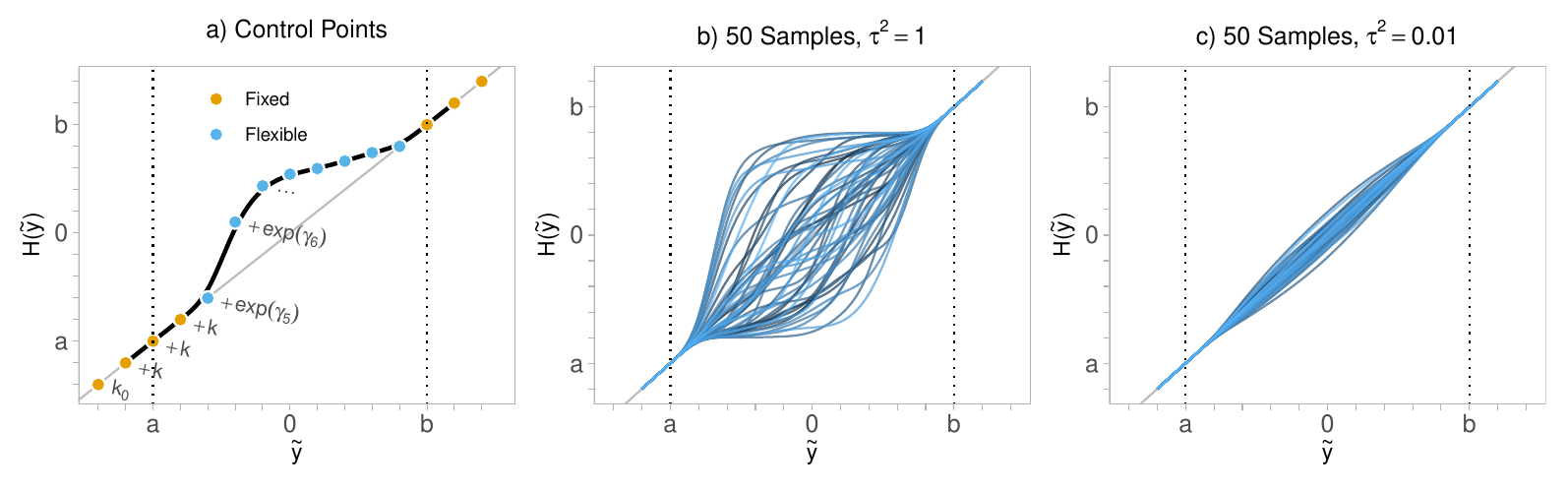}
    \caption{Panel a) illustrates our modified spline using a single random sample of $\bsbeta_i$, which is transformed into $\bsgamma$ using Eq.~(\ref{eq:onion-coefficients}). The points are the B-spline control points $\gamma_{1,i} + \sum_{\ell=2}^j \exp(\gamma_{\ell,i})$, aligned with the knots $k_0, \dots, k_{m+1}$.
        Panels b) and c) each show 50 prior predictive samples of $\calH_i$ obtained by drawing $\bsbeta_i$ from our \textit{onion prior} with variance parameters $\tau^2=1$ (b) and $\tau^2=0.01$ (c).}
    \label{fig:onion-spline}
\end{figure}

\subsection{Regularization for the semi-parametric transformations}
\label{sec:prior-trafo}

Below, while we use the Bayesian terminology of ``prior distributions'' to motivate and construct regularization terms in $\calH$ and induce spatial information pooling in both $\calH$ and $\calG$, note that these priors are employed merely as penalty functions, and parameter estimation is carried out by maximizing the unnormalized log posterior, resulting in maximum a posteriori (MAP) point estimates (see Section~\ref{sec:inference}).

We design a prior for $\bsbeta_i$ with three goals in mind. First, the prior
should regularize estimates of $\calH_i$ towards smoothness to combat overfitting. Second, it
should penalize deviations of the marginal distributions from the base model by regularizing $\calH_i$ towards the identity function. Third, the prior should allow us to borrow information across space to improve estimation in the face of small numbers of training replicates.

\subsubsection{The onion prior for local regularization}

By Theorem~\ref{thm:collapse-to-identity}, the marginal model for $y_i$ collapses to the parametric base model $F_i$, if all elements of $\bsbeta_i$ are equal. We can thus regularize the marginal models towards the base models by penalizing differences in $\bsbeta_i$. We achieve this penalization by placing a first-order random-walk prior on the elements of $\bsbeta_i$, such that $\beta_{i,1} \sim \calN(\beta, \tau^2)$ with starting condition $\beta$, and $\beta_{i,d} \sim \calN(\beta_{i,d-1}, \tau^2)$ for subsequent $d = 2, \dots, D$. (We omit a subscript $i$ on $\tau$ here, as we did not observe improved accuracy using location-specific $\tau$ values in our climate application.) The starting condition $\beta$ is arbitrary, since it cancels out in Eq.~(\ref{eq:onion-coefficients}); we therefore use $\beta=0$ for simplicity.
The resulting joint prior for $\bsbeta_i$ is given by $\bsbeta_i \sim \calN(\bfzero, \tau^2\bfS)$, where $\bfS$ is the (discrete) Brownian motion covariance matrix with entries $\bfS_{[r,c]} = \min(r, c)$.
As $\tau^2 \rightarrow 0$, the elements of $\bsbeta_i$ are forced towards equality, such that by Theorem~\ref{thm:collapse-to-identity}, the marginal model falls back to the base model $F_i$. In general, the elements of $\bsbeta_i$ are regularized towards similarity. This allows us to use a relatively generous default number of parameters (e.g., $D=40$) to capture complex shapes when evident in the data, while automatically reverting to the smooth parametric form elsewhere. This also gives the variance parameter a simple diagnostic interpretation when it is estimated from the data: a fitted value $\hat{\tau}^2$ close to zero indicates that $\calH_i$ is shrunk toward the identity and that the semi-parametric layer contributes little beyond the base model. Panels b) and c) of Figure~\ref{fig:onion-spline} show prior predictive samples of $\calH_i$ based on 50 random draws of $\bsbeta_i$ from this prior for $\tau^2=1$ (Panel b) and $\tau^2 = 0.01$ (Panel c). We find the pattern in Panel b) to be reminiscent of an onion cut in half, hence the name \textit{onion prior}.
This prior setup is closely related to
penalized splines \citep[see][]{Lang2004-BayesianPsplines} and their adaptation in shape-constrained additive models \citep{Pya2015-ShapeConstrainedAdditive}.

\subsubsection{Spatial smoothing}
\label{sec:spatial-smoothing}
We extend the random-walk prior to incorporate spatial dependence using a multi-output GP construction \citep{vanderWilk2020-FrameworkInterdomainMultioutput}. Consider the Cholesky decomposition $\bfS = \bfW\bfW^\top$ of the Brownian motion covariance of $\bsbeta_i$. We write $\bsbeta_i = \bfW \tilde{\bsbeta}_i \in \bbR^{D}$ for $i = 1, \dots, \Ndim$. For each $d = 1, \dots, D$, we define a latent spatial process $\tilde \beta_{d} \sim \mathcal{GP}(0,k_\beta(\cdot, \cdot'))$ with shared amplitude $\tau$ and length scale $\ell^\calH$. The latent coefficient vector $\tilde \bsbeta_i$ is obtained by evaluating these processes at the location indexed by $i$. For $\Ndim$ spatial locations, the resulting joint prior for $\bsbeta = \text{vec}((\bsbeta_1, \dots, \bsbeta_\Ndim)^\sfT)$ collected at all locations has a separable covariance,
$$
\bsbeta \sim \mathcal N\bigl(\bfzero_{\Ndim D},\bfK_{\beta\beta}\bigr), \quad \bfK_{\beta\beta} = \bfS \otimes \widetilde{\bfK}_{\beta\beta},
$$
where the $\Ndim \times \Ndim$ spatial covariance matrix $\widetilde{\bfK}_{\beta\beta}$ is obtained by evaluating $k_\beta(\cdot,\cdot')$ at the $\Ndim$ locations.

\subsection{Maximin ordering of \texorpdfstring{$\bsy$}{y}}
\label{sec:maximin-ordering}
As in \cite{Katzfuss2023-ScalableBayesianTransport}, we use a maximum-minimum-distance (maximin) ordering \citep[e.g.,][]{Guinness2018-PermutationGroupingMethodsb} of the elements of $\bsy$ based on their locations. The ordering is meaningful for the triangular map $\calT$ and provides an effective method for selecting inducing locations for the low-rank approximations of the marginal models, described below in Section~\ref{sec:inference}. Maximin ordering is achieved as follows. The first location is chosen arbitrarily. Subsequent locations are chosen sequentially such that the distance between each new location and its nearest neighbor among the previously chosen locations is as big as possible. Formally, let $\bss_1, \dots, \bss_\Ndim$ denote the unordered locations and $\bss_{i_1}, \dots, \bss_{i_\Ndim}$ the ordered locations, meaning that the ordering is encoded in the indices $i_1, \dots, i_\Ndim$. Arbitrarily start by setting $i_1 = 1$, then for $j = 2, \dots, \Ndim$ choose the next index as $i_j = \argmax_{i \notin \calI_{<j}} \min_{k \in \calI_{<j}} \| \bss_i - \bss_k \|$, where $\calI_{<j} = \{i_1, \dots, i_{j-1}\}$ holds the indices of the already ordered locations. The resulting ordering of $\bsy$ reflects a coarse spatial grid early in the ordering that transitions to finer grids later in the ordering. An illustration for a global grid is given in Figure~\ref{fig:maximin}.

\begin{figure}[bt]
    \centering
    \includegraphics[width=\linewidth]{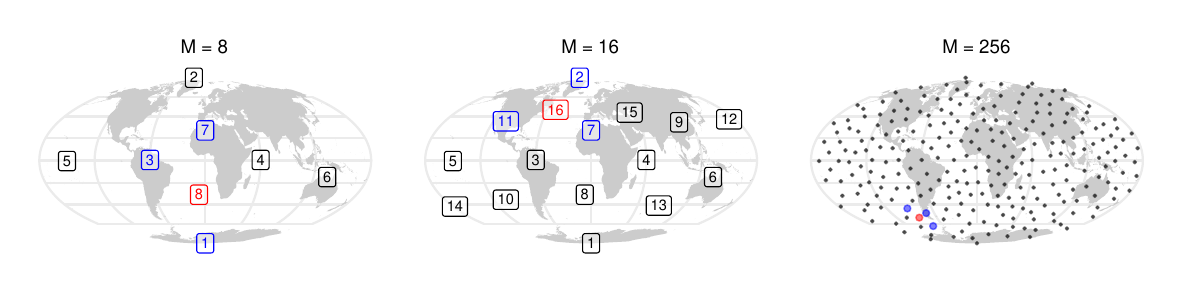}
    \caption{Illustration of the maximin ordering of locations on a global grid, showing the first $M$ locations.
        In the left and middle panels, numbers indicate the position in the ordering. The maximin criterion places each new point to maximize the chordal distance to all preceding points, ensuring uniform global coverage. Apparent asymmetries in the early points arise from projecting this spherical optimization onto a 2D plane for plotting.
        In each panel, the last location in the ordering is displayed in red, and its three nearest neighbors among the preceding locations are displayed in blue.
        }
    \label{fig:maximin}
\end{figure}

\subsection{Review of the scalable Bayesian transport map  \texorpdfstring{$\calT$}{T}}
\label{sec:transport-map}

We adopt the scalable Bayesian transport map of \cite{Katzfuss2023-ScalableBayesianTransport}, which decomposes the complex joint dependence into a sequence of univariate conditional regressions. The components of the triangular transport map $\calT$ are modeled as $\calT_i(\tilde \bsz_{1:i}) = (\tilde z_i - f_i(\tilde \bsz_{<i})) d_i^{-1} \sim \calN(0, 1)$.
Gaussian-process inverse-Gamma priors are placed on $f_i$ and $d_i^2$, with $d_i^2 \sim \calI \calG(\alpha_i, \beta_i)$ and $f_i | d_i^2 \sim \calG \calP(0, d_i^2K_i)$.
The key feature of the setup is the use of structure in $\tilde \bsz$ to drastically reduce the number of parameters that need to be estimated.
Like \cite{Katzfuss2023-ScalableBayesianTransport}, we consider spatial information associated with each element $\tilde z_i$ of $\tilde \bsz$ in the form of locations $\bss_i \in \calS$, $i = 1, \dots, \Ndim$, where $\calS$ is the spatial domain. Now, let $\delta_i = \min_{j \in \{1, \dots, i-1\}} \| \bss_i - \bss_j \|$ denote the minimum pairwise distance between the location $\bss_i$ and any location $\bss_1, \dots \bss_{i-1}$ that precedes $\bss_i$ in the ordering of $\tilde \bsz$.

Intuitively, a small $\delta_i$ means that, among the observations used to predict $\tilde z_i$, there is at least one observation from a nearby location. Thus, the smaller $\delta_i$, the more representative $f_i(\tilde \bsz_{<i})$ is likely to be for $\tilde z_i$, and the smaller the conditional variance $d_i^2$ should be.
\cite{Katzfuss2023-ScalableBayesianTransport} apply results from \cite{Schafer2021-CompressionInversionApproximate} to observe that a roughly power-functional decay of $d_i^2$ with respect to $\delta_i$ is plausible for the considered scenario.
Reflecting this pattern, the prior for $d_i^2$ is set up with $\bbE(d_i^2) = \beta_i / (\alpha_i - 1) = \exp(\theta^d_1 + \exp(\theta^d_2) \log(\delta_i))$ and $\text{SD}(d_i^2) = g \bbE(d_i^2)$, which results in $\alpha_i = 2 + 1 / g^2$ and $\beta_i = \exp(\theta^d_1 + \exp(\theta^d_2) \log (\delta_i))(1 + 1/g^{2})$. The setup ensures that $\bbE(d_i^2)$ is increasing in $\delta_i$.
The constant $g \in \bbR_{>0}$ scales the prior standard deviation of $d_i^2$ relative to its expectation, where the default $g=4$ is used to obtain a relatively weak prior.

For the first location, we use $f_1(\tilde{\bsz}_{<1}) = 0$ and $\delta_1 = \delta_2^2 / \delta_5$. The latter choice is motivated by the observation that the minimum pairwise distance decays roughly as $\delta_i\propto i^{-1/\mathrm{dim}}$ \citep{Katzfuss2023-ScalableBayesianTransport}. We normalize all pairwise minimum distances by dividing by $\delta_1$, so that $\exp(\theta_1^d)$ is the expected variance at location $1$, while $\theta^d_2$ controls how
the expected variance scales with increasing distance $\delta_i$.

The covariance function $K_i$ in the Gaussian-process prior for $f_i$ is designed to shrink $f_i$ towards linearity with decreasing $\delta_i$ and let the contribution of each input $\tilde z_1, \dots, \tilde z_{i-1}$ decay exponentially according to its position when ordered by increasing distance to the location of $\tilde z_i$. These goals are achieved by letting
\begin{equation}
    \label{eq:tm-covariance}
    K_i(\tilde \bsz_{o(i)}, \tilde \bsz_{o(i)}') = \bbE(d_i^2)^{-1}
    \left[
        \tilde \bsz^\sfT_{o(i)}\bsQ_i \tilde \bsz'_{o(i)}
        + \sigma^2_i
        \rho \left(
        \gamma^{-1} \sqrt{
            (\tilde \bsz_{o(i)} - \tilde \bsz'_{o(i)})^\sfT\bsQ_i (\tilde \bsz_{o(i)} - \tilde \bsz'_{o(i)})
        }
        \right)
        \right]
\end{equation}
for $i = 2, \dots, \Ndim$,
which requires some elaboration. First, $\tilde \bsz_{o(i)}$ denotes a view of $\tilde \bsz_{<i}$ that is ordered in increasing distance to the location of $\tilde z_i$, such that $\tilde z_{o(i)_1}$ is the element of $\tilde \bsz_{<i}$ that is closest to $\tilde z_i$. Second, $\bsQ_i$ is a diagonal matrix with the diagonal element indexed by $j$ given by $\exp(-j\exp(\theta^q))$ for $j = 1, \dots, i-1$.
This way, $\bsQ_i$ both controls the linear part of $f_i$ and encodes the decreasing contributions of inputs to $f_i$ according to their distance to the location of $\tilde z_i$. Due to the ordering, larger $j$ imply higher distance to the location of $\tilde z_i$, and thus the diagonal entries of $\bsQ_i$ are decreasing in $j$.
The hyperparameter $\theta^q$ controls how quickly they approach zero; with larger $\theta^q$ implying a steeper decrease.
Third, $\rho$ denotes a covariance function, where $\gamma = \exp(\theta^\gamma)$ serves as a range parameter, such that $\theta^\gamma$ is the log range. We use the Matérn covariance function with smoothness $3/2$. Fourth, $\sigma_i^2 = \exp(\theta^\sigma_1 + \exp(\theta^\sigma_2) \log (\delta_i))$ determines the extent of nonlinearity in $f_i$ and decays similarly to $d_i^2$, rendering $f_i$ to resemble a linear function more closely as $\delta_i$ decreases. The meaning of the hyperparameters $\theta^\sigma_1$ and $\theta_2^\sigma$ is analogous to $\theta_1^d$ and $\theta_2^d$, respectively.

The regression functions $f_i$ can be interpreted according to their position in the maximin ordering.
Consider Figure~\ref{fig:maximin}: for small $i$, the locations preceding $\bss_i$ are spatially dispersed, so that even the nearest previously ordered inputs to $f_i$ are relatively far from $\bss_i$ and the corresponding regressions primarily capture long-range dependence.
As $i$ increases, the nearest preceding locations become progressively closer to $\bss_i$, and the regressions increasingly capture fine-scale, local dependence.
The model thus achieves a smooth transition from coarse to fine spatial scales.

Notably, the model does not require explicit information on geographical features such as topography or whether a point is located on land or water. Instead, their influence is represented indirectly, through the empirical dependence patterns captured by the regressions $f_i$, since neighboring locations need not have the same predictive relevance if their observed relationship with $\tilde z_i$ differs.

In sum, the scalable Bayesian transport map is governed by the six hyperparameters $\bstheta^\calT = (\theta^d_1, \theta^d_2, \theta^q, \theta^\gamma, \theta^\sigma_1, \theta^\sigma_2)^\sfT \in \bbR^6$. For given $\tilde z_i$ and $\bstheta^\calT$, the Gaussian-process inverse-gamma prior on $(f_i, d_i^2)$ is conjugate, such that a posterior transport map is available in closed form.

\section{Parameter estimation and computational complexity}
\label{sec:inference}

The model involves three sets of unknown parameters: the parametric marginal parameters
$\bszeta$ and hyperparameters $\bstheta^\calG$ in $\calG$, where $\bszeta = \operatorname{vec}((\bszeta_1, \dots, \bszeta_\Ndim)^\sfT)$, the semi-parametric marginal parameters $\bsbeta$ and hyperparameters $\bstheta^\calH$ in $\calH$, where $\bsbeta = \operatorname{vec}((\bsbeta_1, \dots, \bsbeta_\Ndim)^\sfT)$, and the transport-map hyperparameters $\bstheta^\calT$ in $\calT$.
Joint estimation of all parameters is computationally intractable due to the combination of the non-Gaussian semi-parametric likelihood and the massive dimensionality. Therefore, we adopt a two-stage estimation strategy, common in copula literature \citep{Joe2005-AsymptoticEfficiencyTwostage}: we first estimate the marginal models (parametric + semi-parametric) jointly under a working assumption of spatial independence of $\tilde \bsz$, and then estimate the spatial dependence in $\tilde \bsz$ (transport map) given the fitted marginals.

\subsection{Stage 1: Scalable marginal-model estimation}
\label{sec:stage1}

\subsubsection{Low-rank approximation}
\label{sec:stage1-lowrank}
Direct evaluation of the priors for $\bszeta$ and $\bsbeta$ scales as $\calO(P\Ndim^3)$ and $\calO(\Ndim^3)$, respectively, due to the necessity to invert the $\Ndim \times \Ndim$ spatial covariance matrices $\bfK_{\zeta_p \zeta_p}$ and $\widetilde \bfK_{\beta \beta}$.
We overcome this bottleneck using a low-rank approximation based on $M \ll \Ndim$ inducing locations, selected as the first $M$ points in a maximum-minimum-distance ordering (see Section~\ref{sec:maximin-ordering} and Figure~\ref{fig:maximin}), which we describe for $\bsbeta$ and apply analogously to~$\bszeta$.
We parameterize the multivariate spatial field $\bsbeta \in \mathbb{R}^{\Ndim D}$ through whitened inducing variables $\bfu \sim \calN(\bfzero_{MD}, \bfI_{MD})$ at these $M$ locations, writing
\begin{equation}
    \label{eq:beta-basis}
    \bsbeta = \bfB(\bstheta^\calH)\bfu, \quad \text{where} \quad \bfB(\bstheta^\calH) = \bfK_{\beta u} \bfK_{uu}^{-1}\bfL,
\end{equation}
with $\bfK_{uu} = (\bfS \otimes \widetilde \bfK_{uu})$ and $\bfL\bfL^\sfT = \bfK_{uu}$, such that $\bfL$ is the lower Cholesky factor of the covariance matrix at the locations of the inducing variables, $\bfL \bfu \sim \calN(\bfzero_{MD}, \bfK_{uu})$.
This approach corresponds to projected process prediction in \cite{Rasmussen2008-GaussianProcessesMachine}, with implied prior covariance $\bfK_{\beta\beta} = \bfK_{\beta u} \bfK_{uu}^{-1} \bfK_{u\beta}$, and can be interpreted as a basis-function parameterization with the covariance function acting as a (hyperparameter-dependent) basis function and $\bfu$ acting as the coefficient vector. Instantiation of the full $(D\Ndim \times DM)$ matrix $\bfK_{\beta u}$ can be avoided by exploiting its Kronecker structure to rewrite the model as
$\bsbeta = \operatorname{vec}(\widetilde \bfK_{\beta u}\widetilde \bfL^{-\sfT} \bfU \bfW^\sfT)$, where $\widetilde\bfL \widetilde\bfL^\sfT = \widetilde\bfK_{uu}$, $\bfW \bfW^\sfT = \bfS$, $\bfU$ is $M\times D$ so that $\bfu = \operatorname{vec}(\bfU)$, and $\widetilde \bfK_{\beta u}$ of dimension $(\Ndim \times M)$ is the largest matrix. This low-rank approximation reduces the dominant computational cost to the Cholesky factorization of the $(M \times M)$ matrix $\widetilde \bfK_{uu}$, which scales as $\calO(M^3)$, and a factor $\calO(\Ndim MD)$ from matrix multiplication. Consequently, the prior evaluation cost becomes linear in $\Ndim$.
For $\bszeta$, we analogously have $P$ independent expressions $\bszeta_{\cdot, p} = (\zeta_{1,p}, \dots, \zeta_{\Ndim,p})^\top = \bfB(\bstheta^\calG_p)\bfv_p$, with $\bfv_p \sim \calN(\bfzero_{M}, \bfI_{M})$ and $\bszeta = \operatorname{vec}((\bszeta_{\cdot,1}, \dots, \bszeta_{\cdot,P}))$.
The inducing-point representation also acts as a regularizer for the spatial fields of marginal parameters and spline coefficients. This borrowing of information is helpful when $\Nobs$ is small, but, like any low-rank spatial smoothing technique, it may attenuate abrupt local changes in marginal behavior. If the true marginal distributions exhibit more local variation than the low-rank model can capture, we would expect spatially coherent patterns of marginal lack of fit. Such patterns can be assessed via marginal-only comparisons and diagnostic maps of quantile discrepancies, as discussed in Section~\ref{sec:global-data} and the Appendix \citep{Brachem2026-SCT-Appendix}.

\subsubsection{Estimation}
\label{sec:stage1-estimation}
Parameter estimation in Stage 1 aims to achieve $\tilde z_i = (\calH_i^{\bsbeta_{i}} \circ \calG_i^{\bszeta_i})(y_i) \sim \calN(0, 1)$, $i = 1, \dots, \Ndim$,
where we write $\calH_i^{\bsbeta_i}$ to explicitly denote the dependence of $\calH_i$ on the parameters $\bsbeta_i$.
To ease notation, we do not explicitly denote the dependence of $\bsbeta_i$ and $\bsbeta$ on $\bfu$ and $\bstheta^\calH$ (cf.\ Eq.~(\ref{eq:beta-basis})).
The same convention applies to $\calG_i$ and $\bszeta_i$, with $\bszeta_i$ depending on $\bfv$ and $\bstheta^\calG$.
We employ a working independence assumption (which is dropped in Stage 2), constructing an auxiliary joint density
\begin{equation}
    p(\bsy | \bfv, \bfu) = \prod_{i=1}^\Ndim
    \left(
    \bigl(\phi \circ \calH_i^{\bsbeta_i} \circ \calG_i^{\bszeta_i}\bigr)(y_i)
    \times \left| \frac{\partial\, \calH_i^{\bsbeta_i}(\tilde y_i)}{\partial \tilde y_i} \right|
    \times \left| \frac{\partial\, \calG_i^{\bszeta_i}(y_i)}{\partial y_i} \right|
    \right),
\end{equation}
where $\phi$ denotes the standard Gaussian density and $\tilde y_i = \calG_i(y_i | \bszeta_i)$. Given $\Nobs$ independent replicates $\bsy_j$, $j = 1, \dots, \Nobs$, we apply gradient ascent to obtain joint maximum a posteriori (MAP) point estimates of the parameters and hyperparameters:
\begin{equation}
    \bigl(\hat \bfv, \hat \bfu, \hat \bstheta^\calG,  \hat \bstheta^\calH\bigr)
    = \argmax_{\bfv, \bfu, \bstheta^\calG, \bstheta^\calH}
    \left(
    \prod_{j=1}^{\Nobs} p\bigl(\bsy_j | \bszeta, \bsbeta\bigr)
    \right)
    p(\bfu)
    \prod_{p=1}^P
    p(\bfv_p)\,
    .
\end{equation}
For most covariance functions, each $\bstheta_{p}^\calG$, $p=1, \dots, P$, and $\bstheta^\calH$ will consist of a length scale $\ell$ and an amplitude $\tau$ (omitting any indices), both of which are restricted to the positive real line. We therefore use softplus transformations $\ell = \log(1 + \exp(\eta_\ell))$ and $\tau^2 = \{\log(1 + \exp(\eta_\tau))\}^2$ and optimize over the real-valued parameters $\eta_\ell$ and $\eta_\tau$ using constant (uniform) priors. On the levels of $\ell$ and $\tau^2$, this implies improper priors that concentrate probability mass near zero but have a heavy right tail. See Section~B of the Appendix \citep{Brachem2026-SCT-Appendix} for more details.

\subsection{Stage 2: Estimation of the transport map \texorpdfstring{$\calT$}{T}}
\label{sec:transport-map-inference}

In the second stage, we fix the marginal transformations using the Stage 1 estimates to obtain pseudo-data $\hat{\tilde \bsz} = (\calH^{\hat \bsbeta} \circ \calG^{\hat \bszeta})(\bsy)$, where $\hat{\bsbeta} = \bfB(\hat{\bstheta}^\calH) \hat{\bfu}$ and $\hat{\bszeta}_p = \bfB(\hat{\bstheta}_p^\calG) \hat{\bfv}_p$ for all $p = 1, \dots, P$. Estimation aims to achieve $\calT(\hat{\tilde \bsz}) \sim \calN(\bfzero, \bfI)$, such that the second-stage joint density is
\begin{equation}
    \label{eq:density-T}
    p(\hat{\tilde \bsz}) =  \prod_{i=1}^\Ndim
    \left(
    \phi\bigl( \calT_i(\hat {\tilde \bsz}_{1:i}) \bigr) \times \left|
    \frac{\partial\, \calT_i(\hat {\tilde \bsz}_{1:i})}{\partial \hat{\tilde z}_i}
    \right|
    \right).
\end{equation}
This allows us to directly apply the empirical Bayes estimation approach developed by \cite{Katzfuss2023-ScalableBayesianTransport}, which we briefly summarize here.
Through $\calT_i$, the joint density Eq.~(\ref{eq:density-T}) depends on $f_i$ and $d_i^2$; $i = 1, \dots, \Ndim$, which in turn depend on the hyperparameters $\bstheta^\calT$ as described in Section~\ref{sec:transport-map}.
We collect the sample of
$\Nobs$ pre-transformed observations $\hat{\tilde{\bsz}}_j$
in a
$\Nobs \times \Ndim$ matrix $\hat{\tilde \bfZ} = (\hat{\tilde{\bsz}}_1, \dots, \hat{\tilde{\bsz}}_\Nobs)^\sfT$, with row $j$ holding realization $\hat{ \tilde \bsz}_j$ for all $j=1, \dots, \Nobs$.
Proposition 2 in \cite{Katzfuss2023-ScalableBayesianTransport} gives an expression for the marginal likelihood $p(\hat{\tilde \bfZ}| \bstheta^\calT)$ with the $f_i$ and $d_i^2$ integrated out.
We maximize this marginal likelihood using gradient-based optimization to find $\hat \bstheta^\calT = \argmax_{\bstheta^\calT} p(\hat{\tilde \bfZ} | \bstheta^\calT)$.
Given $\bstheta^\calT$, we obtain the elements of the posterior transport map by Proposition 1 in \cite{Katzfuss2023-ScalableBayesianTransport} in closed form.
The scalability of estimation for $\calT$ is improved by a Vecchia-type approximation \citep{Kidd2022-BayesianNonstationaryNonparametric,Katzfuss2021-GeneralFrameworkVecchia}, making the simplifying assumption that $f_i$ depends on a subset of the elements of $\tilde \bsz_{<i}$ instead of the full vector: $f_i(\tilde \bsz_{<i}) = f_i(\tilde \bsz_{c(i)})$ for some $c(i) \subseteq \{1, \dots, i-1\}$. The size of the conditioning set $c(i)$ is based on the relevance decay in the prior for $f_i$, such that only relevant elements of $\tilde \bsz_{<i}$ are included. Specifically, the maximum size of $c(i)$ is set to $m = \max\{j \geq 1 : \exp(-j \exp(\theta^q)) \geq \epsilon \}$, with $\epsilon = 0.01$. This reduces the computational complexity for determining the $i$th posterior-map element from $\calO(\Nobs^3 + i\Nobs^2)$ to at most $\calO(\Nobs^3 + m\Nobs^2)$, with $m$ typically $<10$.

\subsection{Scope of uncertainty quantification} \label{sec:uncertainty}

Our estimation strategy accounts for uncertainty only partially. The marginal transformations are fixed at regularized MAP estimates, and the transport-map hyperparameters are fixed at empirical-Bayes estimates. Such a plug-in treatment of marginal models is common in copula applications, where marginals are often estimated before fitting the dependence structure \citep{Joe2005-AsymptoticEfficiencyTwostage}. Conditional on these point estimates, the posterior predictive transport map accounts for uncertainty in the conjugate quantities $f_i$ and $d_i^2$, but it does not propagate uncertainty about the marginal transformations into the dependence model.

This limitation is most relevant for summaries that depend on marginal tail behavior, especially when $\Nobs$ is very small or relevant tail regions are weakly represented. In our application, this includes threshold-based severe-rainfall and dry-condition probability maps: small changes in the fitted marginals can affect local exceedance probabilities and, through the transformed pseudo-data, estimates of spatial co-occurrence. Predictive-sample intervals may therefore be too narrow in such settings.

A full-model bootstrap could propagate this uncertainty in principle, but would require many complete global refits. A more principled route is joint inference over the marginal transformations and the dependence model, possibly through structured variational approximations that retain much of the present sparsity.

\subsection{Computational complexity}

Assuming $\Ndim > M > \Nobs > D > m > P$, the overall time complexity of a single evaluation of the unnormalized log posterior in our approach can be summarized as $\mathcal{O}(\Ndim D(M + \Nobs) + M^3P)$ for our marginal model and $\mathcal{O}(\Ndim\Nobs^3)$ for the Bayesian transport-map dependence model. Inverse evaluations scale linearly in $\Nobs\Ndim$ except for the transport map inversion, which introduces a quadratic dependence on $\Nobs$ per recursion step. For optimization, gradients are obtained via automatic differentiation in JAX \citep{deepmind2020jax}, so gradient evaluations scale like log-posterior evaluations up to a constant-factor overhead.
In typical applications, the overall runtime is primarily driven by operations linear in $\Nobs\Ndim$ and cubic in $\Nobs$ (which is typically small).
A detailed summary of all computational costs is provided in Section~C of the Appendix \citep{Brachem2026-SCT-Appendix}.
We discuss possible use of the fitted model for compressed storage, including its relation to storage-oriented stochastic generators and a conditional-completion illustration, in Section~D of the Appendix \citep{Brachem2026-SCT-Appendix}.

\section{Reproducing internal climate variability}
\label{sec:application}

\subsection{Climate data and comparison set-up}
\label{sec:application-setup}

We demonstrate our method's capacity to generate high-dimensional climate fields using global log-precipitation-rate fields
from the Community Earth System Model (CESM) Large Ensemble Project \citep{Kay2015-CommunityEarthSystem}, as mentioned in the introduction.
While CESM simulates the full climate system, generating large ensembles of a single target field, such as precipitation rate, still requires the full computational cost of the underlying model, on the order of $10^5$ CPU hours per ensemble member. Thus, a data-efficient stochastic surrogate offers immense scientific value.
Recall that the data reside on a roughly 1\degree\ longitude-latitude global grid of size $\Ndim = 288 \times 192 = 55{,}296$. We use 98 samples, each representing a July~1 daily-average field from one of 98 consecutive CESM model years starting in the year 402. The grid is trimmed at the poles to remove duplicate locations, yielding a field of size $\Ndim = (288 \times 190) + 2 = 54{,}722$ for practical use. To avoid numerical problems with log-transforming exact zeros in the raw data, we add a small constant $\epsilon = 10^{-10}$ before applying the transformation.

\begin{figure}[bt]
    \centering
    \includegraphics[width=\linewidth]{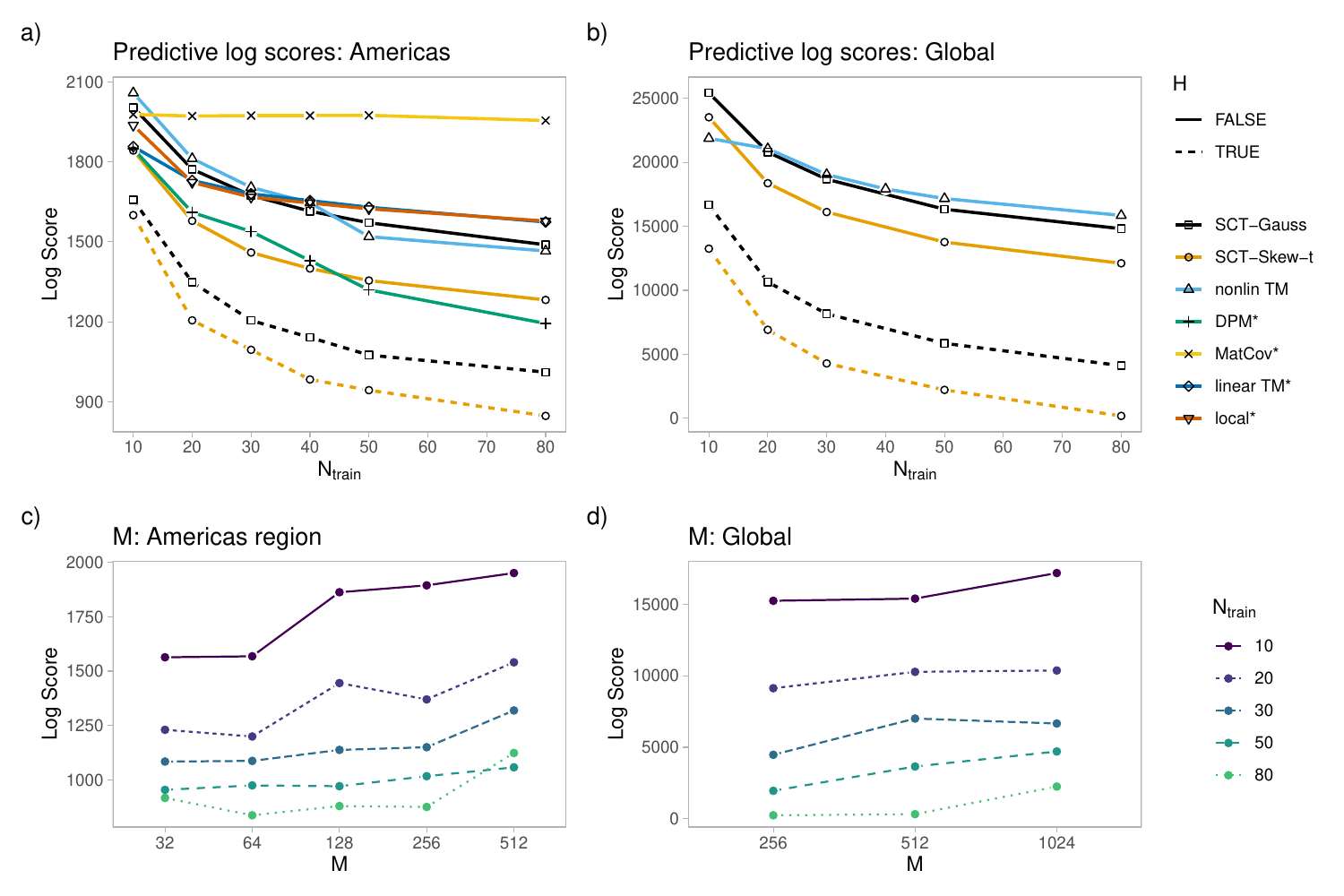}
    \caption{Our SCT models yield substantially more accurate predictive distributions than existing methods, as measured by log scores (lower is better). The top row shows average predictive log scores over five random train/test splits.
        Dashed lines indicate models that include a fitted semi-parametric correction layer $\calH$.
        Results marked with an asterisk (*) are taken from \cite{Katzfuss2023-ScalableBayesianTransport}, Fig.~10, without re-running the models.
        The bottom row shows predictive log scores in the \texttt{SCT-Skew-t} specification (including a fitted semi-parametric layer $\calH$) as a function of the number of inducing points $M$ and training samples $\Ntrain$. \label{fig:logscores}}
\end{figure}

We evaluate several variants of our scalable composite transformations (SCT) model against state-of-the-art baselines:
\begin{itemize}
    \item \texttt{nonlin TM}: Fixes both marginal transformation layers to the identity map, yielding the nonlinear Bayesian transport map as proposed by \cite{Katzfuss2023-ScalableBayesianTransport}.
    \item \texttt{SCT-Gauss}: $\calG_i$ is defined with components $\calG_i(y_i) = \Phi^{-1}\bigl( \Phi(y_i | \mu_i, \sigma_i^2) \bigr) = (y_i - \mu_i)/\sigma_i$, i.e. using a Gaussian distribution with mean and variance varying over locations. We estimate the inverse softplus transformations of the standard deviations, $\eta_i = \log(\exp(\sigma_i) - 1)$, such that we have $\bszeta_i = (\mu_i, \eta_i)$.
    \item \texttt{SCT-Skew-t}: $\calG_i$ is defined with components $\calG_i(y_i) = \Phi^{-1}\bigl( F(y_i | \mu_i, \sigma_i, \alpha_i, \nu) \bigr)$, where $F$ is the CDF of a Type-3 skewed \textit{t}-distribution \citep[see][]{Rigby2020-DistributionsModelingLocation} with location $\mu_i \in \bbR$, scale $\sigma_i \in \bbR_{>0}$, skewness $\alpha_i \in \bbR_{>0}$, and degrees of freedom $\nu\in \bbR_{>0}$. With $\alpha_i=1$, the symmetric \textit{t}-distribution is retrieved; with $\alpha_i > 1$ and $0 < \alpha_i < 1$, the distribution becomes right- and left-skewed, respectively. We estimate the positive-valued parameters on inverse-softplus level, $\eta_{\cdot} = \log(\exp(\cdot) - 1)$, such that we have $\bszeta_i = (\mu_i, \eta_{\sigma_i}, \eta_{\alpha_i}, \eta_{\nu})^\sfT$.
          The degrees of freedom $\nu$ do not vary across locations.
\end{itemize}
We assess each configuration with and without the semi-parametric correction layer $\calH$. When active, $\calH$ uses a global random-walk variance $\tau^2$, boundaries $-a=b=4$, and $D=40$ flexible parameters. In all models, the transport map $\calT$ is defined as in Section~\ref{sec:transport-map}.
These model variants also help interpret the roles of $\calG$ and $\calH$. Because the two marginal layers are estimated jointly, their effects cannot generally be decomposed uniquely from a single fitted model: $\calH$ can absorb structure that could also be represented by the base family in $\calG$. The paired variants therefore serve as empirical contrasts: switching $\calH$ on or off shows the added value of the semi-parametric correction for a fixed base family, while changing $\calG$ under the same $\calH$ setting shows how much the choice of base family still matters.

We evaluate the models in terms of their predictive performance using log scores,
defined as the negative expected log predictive density
$-\mathbb{E}_{\bsy^*}[\log p(\bsy^* \mid \bfY, \hat{\bstheta})]$,
for a new test field $\bsy^*$. Evaluation is conditional on the point estimates
$\hat{\bstheta}=(\hat{\bszeta},\hat{\bsbeta},\hat{\bstheta}^{\calT})$, while integrating over the
conditionally conjugate posterior distributions of $f_i$ and $d_i^2$.
Thus, when evaluating the joint density in \eqref{eq:joint-density} at a
held-out field $\bsy^*$, we use $\hat{\bszeta}$ in $\calG$,
$\hat{\bsbeta}$ in $\calH$, and replace $\calT$ by the posterior predictive
transport map $\widetilde{\calT}$ obtained under
$\hat{\bstheta}^{\calT}$ \citep[Proposition~1]{Katzfuss2023-ScalableBayesianTransport}.
The log score is equivalent, up to an additive constant, to the
Kullback-Leibler divergence between the unknown true distribution and the
model approximation. We approximate the expectation by averaging over 14
holdout replications, which are not used for either estimation or early
stopping, and over five random train/test splits.

\subsection{Americas subregion}
\label{sec:americas}

First, we consider a subset of the full field of size $\Ndim = 37 \times 74 = 2{,}738$ in a region containing parts of the Americas ($30$\degree S to $40$\degree N and $115$\degree W to $70$\degree W). This subset allows for performance comparisons to alternative approaches that do not scale well to the full global dataset. These include \texttt{local}, a locally parametric method for climate data that fits anisotropic Matérn covariances in local windows and combines the local fits into a global model \citep{Wiens2020-ModelingSpatialData}, \texttt{MatCov}, a Gaussian model with zero mean and isotropic Matérn covariance, \texttt{linear TM}, a simplified version of $\calT$ that sets all map elements to linear functions, and \texttt{DPM}, a version of $\calT$ that incorporates a Dirichlet process mixture model to allow non-Gaussian distributions for the errors $\epsilon_i$ in the nonlinear regressions $\tilde z_i = f_i(\tilde \bsz_{<i}) + \epsilon_i$ of the map elements. In \cite{Katzfuss2023-ScalableBayesianTransport}, \texttt{DPM} showed the best predictive performance on the Americas data, but since it requires MCMC sampling, it does not scale well to large spatial fields.

\begin{figure}[tb]
    \centering
    \includegraphics[width=\linewidth]{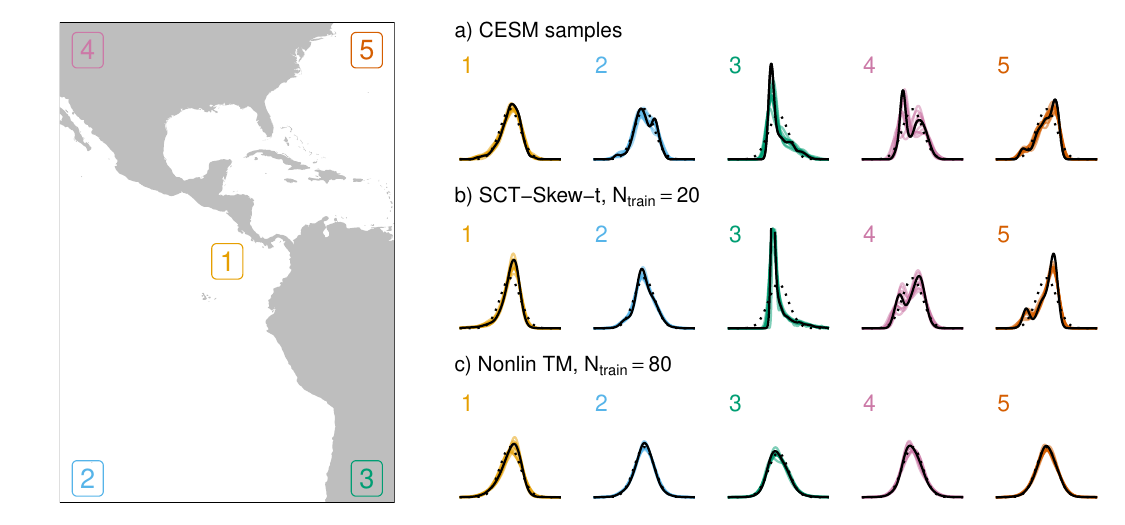}
    \caption{Marginal distributions at and near five locations in the Americas region. The selected locations are the first five locations of a maximin ordering of all locations in the region. Each numbered panel shows data for the respective location and the nine closest locations. The solid black lines show kernel density estimates obtained from pooling the data from all ten locations. The colored lines show kernel density estimates of the data at the individual locations. The dotted line is a standard Gaussian density for reference.
        Row a) uses all 98 replicates in the data, row b) uses 1000 samples from \texttt{SCT-Skew-t} trained on $\Ntrain=20$ replicates, and row c) uses 1000 samples from \texttt{nonlin TM} trained on $\Ntrain = 80$ replicates.
        \label{fig:americas_ip}}
\end{figure}

Our results for the Americas subregion are shown in panel a) of Figure~\ref{fig:logscores}. We use $M = 64$ inducing locations for the low-rank approximations in the marginal models.
For all considered composite models, estimating the
semi-parametric correction dramatically improves predictive performance even with small sample sizes. The best overall performance is obtained by the model with semi-parametric correction and skew-$t$ marginal. All of our full composite transformation models outperform \texttt{DPM} (and all other competitors). Remarkably, \texttt{SCT-Skew-t} achieves a better log score with $\Ntrain=20$ than the previous state-of-the-art (\texttt{DPM}) achieves with $\Ntrain = 80$.
The comparison between \texttt{SCT-Gauss} and \texttt{SCT-Skew-t} is also informative about the role of the parametric base family: even when both specifications include $\calH$, the skew-$t$ base model outperforms the Gaussian base model, indicating that, while $\calH$ mitigates the importance of choosing a suitable parametric baseline, the specific choice still matters in practice.
The scores taken from \cite{Katzfuss2023-ScalableBayesianTransport} are shifted to correct for different preprocessing, see Section~E of the Appendix \citep{Brachem2026-SCT-Appendix} for details.

Figure~\ref{fig:americas_ip} contrasts the standardized marginal distributions of the CESM data against samples from \texttt{SCT-Skew-t} ($\Ntrain = 20$) and \texttt{nonlin TM} ($\Ntrain = 80$) at five areas in the Americas. The data (row a) exhibit similar distributional features at neighboring locations but significant spatial heterogeneity across the five areas, ranging from near-Gaussian (Location 1) to pronounced skewness (Location 3) and multimodality (Location 4).
Despite using only $\Ntrain=20$ replicates, \texttt{SCT-Skew-t} (row b) faithfully reproduces these complex features, capturing the heavy tails and bimodal structures that characterize the local precipitation regimes. Conversely, the \texttt{nonlin TM} (row c) reverts to a Gaussian-like profile even with $\Ntrain = 80$, effectively smoothing out the local asymmetries and extremes essential for accurate risk quantification.

\subsection{Global data}
\label{sec:global-data}

\begin{figure}
    \centering
    \includegraphics[width=\linewidth]{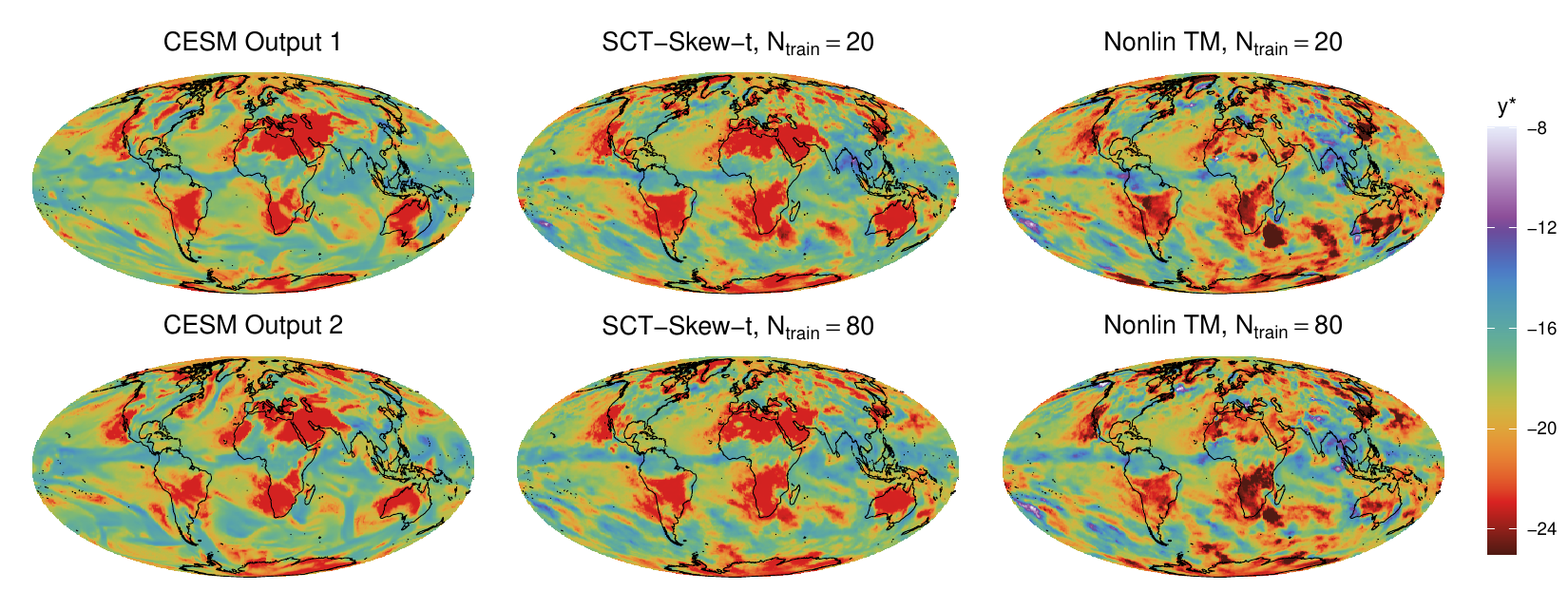}
    \caption{Comparison of climate-model output (left column) to samples from \texttt{SCT-Skew-t} including $\calH$ (middle column) and \texttt{nonlin TM} (right column). Ideally, the surrogate samples should look like they are drawn from the same distribution as the climate-model output. To facilitate comparisons between the samples from the surrogate models, all samples in the right four panels are generated by inverting the transformations on the same random noise realization $\bsz^*$ drawn from $\calN(\bfzero, \bfI)$. To keep the color scale in a reasonable range, we truncate the data to $[-25, -8]$, which is relevant mainly for the samples from \texttt{nonlin TM}.}
    \label{fig:samples-precip-daily-avg}

\end{figure}

For the global dataset, we observe the same pattern in log scores as for the Americas, see Figure~\ref{fig:logscores} b). We use $M=256$ for the SCT models. Since DPM is not viable for the full global dataset, we consider \texttt{nonlin TM} as the state-of-the-art benchmark. Most notably, \texttt{SCT-Skew-t} substantially outperforms \texttt{nonlin TM}, yielding a lower log score using $\Ntrain=10$ than \texttt{nonlin TM} achieves using $\Ntrain = 80$. By reducing the training requirement from 80 to 10~runs, our method effectively provides an eight-fold increase in the computational budget for climate research, allowing scientists to explore eight times as many emission scenarios for the same cost.

In Figure~\ref{fig:samples-precip-daily-avg}, we compare samples for the global grid generated from the physical climate model (left column, labeled \textit{CESM Output 1} and \textit{2}) to samples from \texttt{SCT-Skew-t} and \texttt{nonlin TM}, trained on $\Ntrain = 20$ and $\Ntrain=80$ field samples. Consistent with the quantitative log-score results, the \texttt{SCT-Skew-t} model produces samples that look qualitatively similar to the CESM output, reproducing both the large-scale and small-scale spatial structure, even when trained on only 20 examples. Increasing $\Ntrain$ to $80$ refines the representation without substantially altering the overall appearance.
\texttt{Nonlin TM} also reproduces the main large-scale features of the CESM data, but its samples exhibit noticeably stronger contrasts and more pronounced regional patterns than those present in the CESM output. This becomes particularly apparent when contrasting \texttt{SCT-Skew-t} at $\Ntrain=20$ with \texttt{nonlin TM} at
$\Ntrain=80$: despite having considerably more training data, the \texttt{nonlin TM} sample still exhibits amplified variation, while \texttt{SCT-Skew-t} already provides a close visual match to the CESM output's distribution with far fewer training cases.

\subsubsection{Marginal plausibility}
Figure~\ref{fig:PYsmall} compares the spatial distribution of severe rainfall (Panel a) and dry conditions (Panel b) across the CESM samples and the two surrogate models. In both cases, \texttt{SCT-Skew-t} trained on only $\Ntrain = 20$ fields (middle column) reproduces the spatial structure visible in the CESM data better than \texttt{nonlin TM} trained on $\Ntrain = 80$. The probability of severe rainfall is overestimated by \texttt{nonlin TM} across large areas including North America, Europe, Russia, and equatorial Africa. In contrast, \texttt{SCT-Skew-t} shows less pronounced overestimation, which is further reduced with increased training sample size (not shown).
Considering dry conditions, \texttt{nonlin TM} underestimates the probability of low precipitation, while \texttt{SCT-Skew-t} closely reproduces both the spatial pattern and the relative magnitude of low-precipitation probabilities.
These results corroborate the quantitative results in terms of improved log-scores, suggesting that summaries drawn from \texttt{SCT-Skew-t} are more consistent than those from \texttt{nonlin TM} with the observed spatial distribution of wet and dry conditions in the CESM data, even in small-data settings.

\begin{figure}
\centering
    \includegraphics[width=\linewidth]{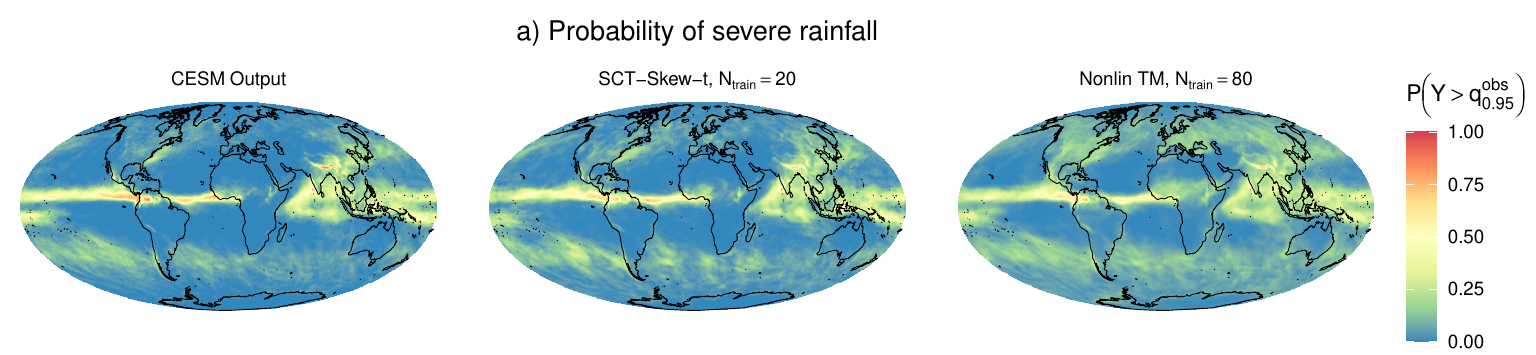}
    \includegraphics[width=\linewidth]{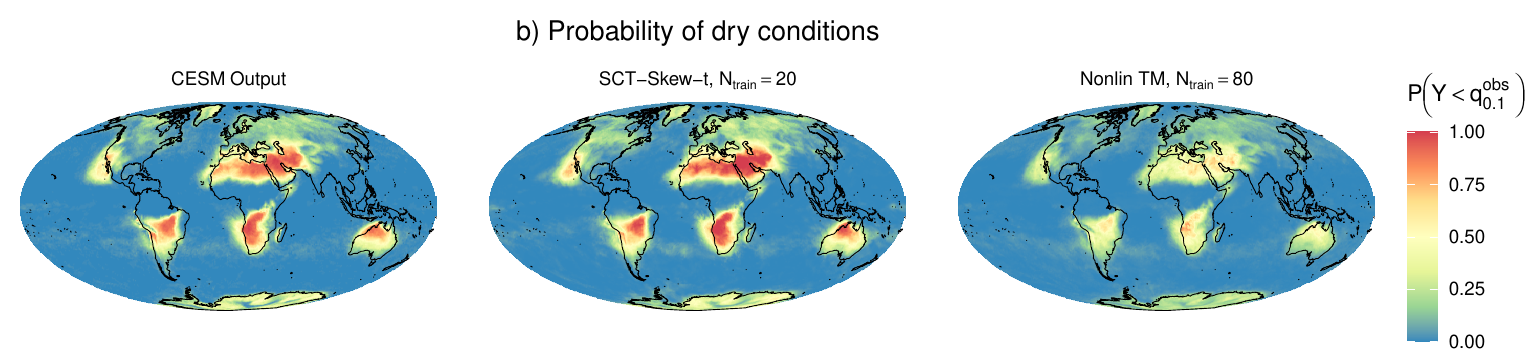}
    \caption{Comparison of location-wise probability of wet (row a) and dry (row b) conditions, defined as values of log precipitation rate above the global $0.95$ quantile and below the global $0.1$ quantile of all 98 fields in the CESM data, respectively ($q_{0.95}^{\text{obs}} \approx -15.81$ and $q_{0.1}^{\text{obs}} \approx -22.78$). We compare the probabilities in the full CESM data (left) to \texttt{SCT-Skew-t} trained on 20 fields (middle) and \texttt{nonlin TM} trained on 80 fields (right).
    \label{fig:PYsmall}}
\end{figure}

Section~F of the Appendix \citep{Brachem2026-SCT-Appendix} provides marginal-only comparisons showing that the fitted marginal model alone tends to oversmooth sharp spatial features, but that adding the transport map restores much of the local structure in generated fields, indicating that the transport map can partially compensate for marginal oversmoothing. Section~H of the Appendix then provides spatial marginal quantile diagnostics that compare model summaries to empirical summaries from the available CESM samples. Because the CESM reference quantiles are estimated from only 98 fields, and because for $\Ntrain=80$ most fields contributing to this reference are also used for training, these maps are interpreted as finite-ensemble calibration diagnostics rather than independent estimates of true marginal quantile error. With this caveat, the diagnostics show that \texttt{SCT-Skew-t} with $\Ntrain=20$ attains average marginal discrepancies superior to \texttt{nonlin TM} with $\Ntrain=80$, while \texttt{SCT-Skew-t} with $\Ntrain=80$ gives the clearest improvement.

\subsubsection{Dependence plausibility}
The preceding visual checks mostly assess marginal plausibility. To also inspect spatial dependence, Figure~\ref{fig:regional-focal-correlations} maps empirical correlations between log precipitation rate at selected focal grid cells and nearby locations. Since the transport map represents nonlinear conditional dependence, these correlations are only a coarse diagnostic rather than a full description of the fitted dependence model. The CESM panels use the available CESM fields, while the model panels are computed from $1000$ generated fields from the corresponding fitted generator.

\begin{figure}
\centering
    \includegraphics[width=\linewidth]{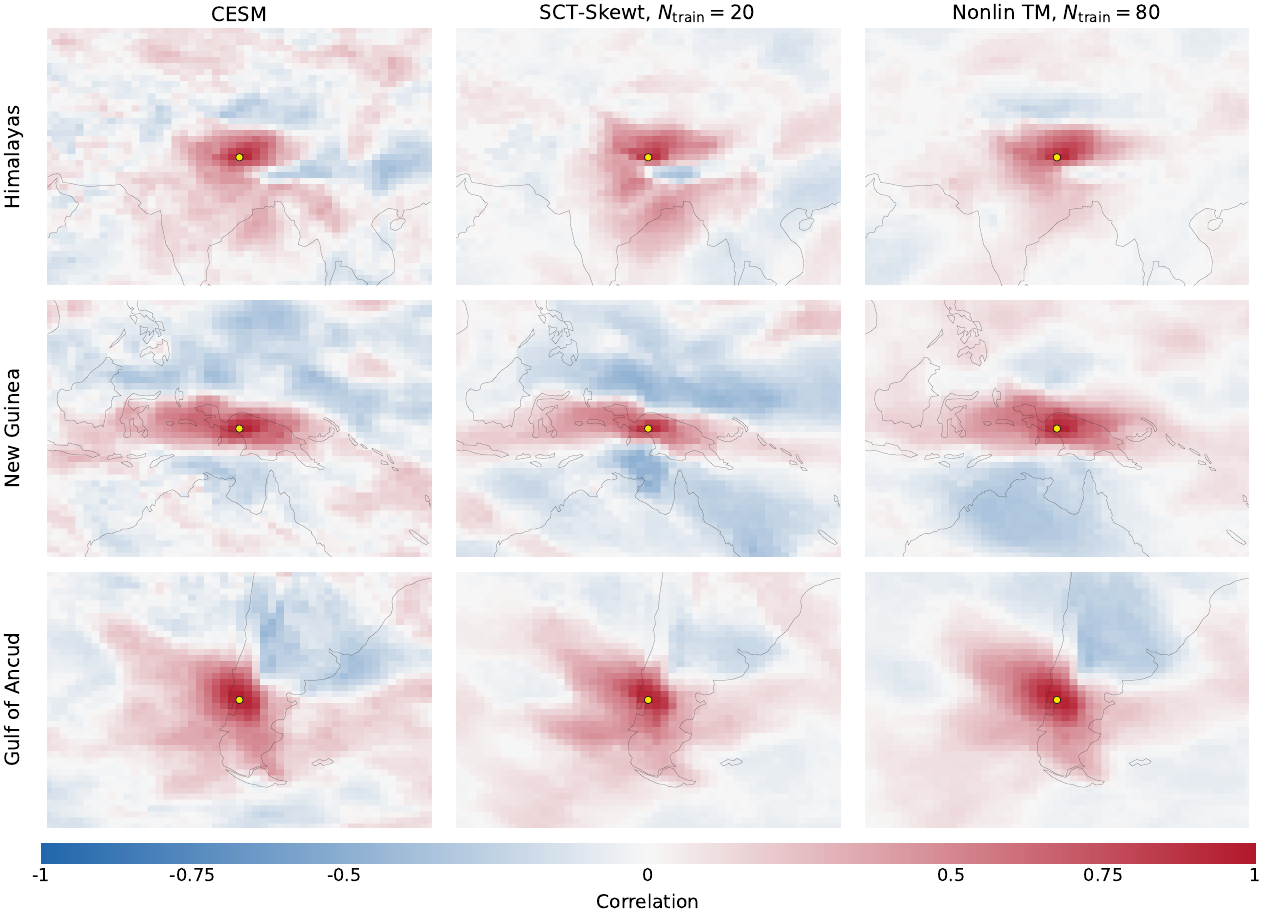}
    \caption{Regional focal-correlation diagnostics for log precipitation rate. Rows show focal locations in the Himalayas, New Guinea, and the Gulf of Ancud; columns show empirical correlations from CESM, \texttt{SCT-Skew-t} trained on $\Ntrain=20$ fields, and \texttt{nonlin TM} trained on $\Ntrain=80$ fields. Each panel covers a $60^\circ$-longitude by $40^\circ$-latitude window centered on the focal grid cell. The yellow point marks the focal location in each row; the corresponding one-based maximin-ordering positions are $16{,}305$, $5$, and $34{,}810$, respectively. Model correlations are computed from $1000$ generated fields. }
    \label{fig:regional-focal-correlations}
\end{figure}

The resulting patterns are physically sensible in the limited but relevant sense that the model-generated fields reproduce regional structures present in the CESM data. In the Himalayas, positive dependence remains strong across the Tibetan Plateau and changes sign at the south-eastern edge. In New Guinea, the positive dependence is organized in an east-west band across the island and nearby Maritime Continent region, with sign changes to the north and south of the band.
For the Gulf of Ancud focal location, the strongest positive correlations are concentrated near Chilo\'e, the Gulf of Ancud, and the adjacent Andean slopes. The conspicuous, nearly right-angled sign reversal is consistent with the local geography: one branch roughly follows the north–south Andes and Chilean coastal archipelago, while the other is roughly aligned with a transition across the Andes from the Pacific-facing slopes to the drier inland side.
Note also that the Gulf of Ancud focal point is Location $34{,}810$ in the maximin-ordering, so the direct conditioning set for its transport map regression component essentially consists of the $m=14$ closest pixels, where $m$ is the number of diagonal entries of $\bsQ_{34{,}810}$ that are non-negligible, operationalized here as being $\geq 0.01$. Notably, the range of the visible marginal correlation extends over a region at least an order of magnitude larger than that covered by the 14 nearest pixels, illustrating that the transport map’s structure effectively propagates long-range dependence through the maximin ordering.

\texttt{SCT-Skew-t} with $\Ntrain=20$ captures these localized and geographically organized patterns about as well as, and in some regions more sharply than, \texttt{nonlin TM} with $\Ntrain=80$. These plots should not be read as evidence that the generator has learned the physical mechanisms themselves: the models can only recover statistical structure that is represented in the training fields. They do show, however, that the generated samples preserve dependence patterns that are geographically interpretable and visible in the CESM ensemble.

Section~G of the Appendix \citep{Brachem2026-SCT-Appendix} provides a complementary global focal-correlation diagnostic showing that the SCT model can also preserve weak but organized long-distance dependence; in this case a positive association near the Maritime Continent together with weaker opposite-signed structure toward the central and eastern equatorial Pacific that is qualitatively consistent with the west-east tropical-Pacific contrasts associated with El Ni\~no-Southern Oscillation (ENSO).

\subsubsection{Inducing locations and runtime} \label{sec:inducing-locs-and-runtime}
Panels c) and d) of Figure~\ref{fig:logscores} compare predictive log scores from \texttt{SCT-Skew-t} when using different numbers of inducing locations $M$ for the Americas data (c) and the whole global grid (d). In both cases, there is no indication of performance improvements to be gained from using more inducing locations; in fact, performance appears to worsen slightly when $M>64$ for the Americas and $M>256$ for the global grid. This sensitivity analysis also informs the interpretation of the spatial marginal diagnostics in the Appendix: increasing the rank of the inducing-point approximation is not supported by the log-score results as a general remedy for remaining spatially structured marginal discrepancies. This is likely a result of the division of labor between the marginal and the dependence model parts: the marginal models, where the inducing locations are relevant, can benefit from the regularizing effect of a high degree of spatial smoothness, as implied by low numbers of inducing points, because the dependence model is well-equipped to capture fine-scale local dependencies in the data. As larger $M$ also substantially increases the computational cost, we use $M=256$ when analyzing the global data. Then, one optimization step takes about 0.7 (0.36) seconds for $\Ntrain=80$ ($\Ntrain=10$), which is insignificant compared to the time cost of about 6 (0.6) seconds per iteration for the dependence model $\calT$ at $\Ntrain=80$ ($\Ntrain=10$). The timing results were obtained on a 14 inch 2021 M1 MacBook Pro using the \texttt{SCT-Gauss} configuration including $\calH$ for the marginal model as described above.
Remarkably, our full model can be trained in minutes on this standard laptop, giving it a small computational footprint relative to the thousands of core-hours required for the original climate simulations.

\section{Conclusions}
\label{sec:conclusion}

We have proposed a scalable generative framework for modeling the internal variability of high-dimensional, non-Gaussian climate fields from limited climate-model ensembles. By composing flexible marginal layers with a sparse Bayesian transport map, our method decouples the estimation of heavy-tailed marginal features from complex spatial dependencies. Crucially, our results demonstrate that this architectural choice enables accurate distributional modeling from small training ensembles, achieving state-of-the-art accuracy with 87.5\% less training data than existing methods. For climate science, this efficiency effectively octuples the computational budget, enabling richer ensemble analyses of tail risks without additional supercomputing costs. While applied to global log-precipitation-rate fields here, the approach is broadly applicable to any massive multivariate setting where data are expensive but structured in a way that allows a distance-based ordering.

A promising direction for extending the model is the inclusion of temporal dynamics and physical drivers (e.g., global mean temperature, emissions levels) or other covariates, such as geographic features (e.g., elevation).
These are natural extensions rather than conceptual obstacles: covariates can enter the parametric marginal layer through distributional regression \citep{Rigby2005-GeneralizedAdditiveModels}, the Gaussian-process priors on marginal parameters can be extended from spatial to spatio-temporal covariance functions, and spatio-temporal or covariate-dependent transport maps \citep{Lei-Cramer2026-ScalableGenerativeModeling} can be combined with the present marginal construction.

 Such extensions would also move the model beyond the fixed-forcing distribution studied here and would make extrapolation uncertainty in the marginal response to covariates more important.

Another important avenue concerns uncertainty quantification.
As discussed in Section~\ref{sec:uncertainty}, our samples are conditional on point estimates for the marginal transformations and transport-map hyperparameters. Future work should develop scalable uncertainty-propagation strategies, especially for threshold-based summaries where uncertainty in the marginal transformations may affect probability statements.

Overall, our proposed framework of scalable composite transformations provides a flexible basis for modeling complex multivariate distributions, and we anticipate that it will support a wide range of future applications.

\footnotesize
\appendix
\numberwithin{figure}{section}
\section*{Acknowledgments}
MK and PFVW were partially supported by NASA's Advanced Information Systems Technology Program (AIST-21). MK's research was also partially supported by National Science Foundation (NSF) Grant DMS--1953005, by industry members of the Center for Interdisciplinary Research on Convective Storms (CIRCS) and NSF award numbers 2517152 and 2517615, and by the Office of the Vice Chancellor for Research at the University of Wisconsin--Madison with funding from the Wisconsin Alumni Research Foundation. JB was partially supported by a scholarship from the German Academic Exchange Service (DAAD) and by the German Research Foundation (DFG) through grant 443179956.

We would like to thank Felix Jimenez and Mark Risser for helpful comments and discussions. During manuscript revision, the authors used OpenAI Codex for coding assistance and language editing of individual paragraphs. All generated outputs were critically reviewed and revised by the authors. The authors maintain sole responsibility for the software, analyses, interpretations, and final manuscript.

\clearpage
\section{Proofs}
\label{app-sec:proofs}

\begin{lemma}[$\calH_i$ passes through $k_1$]
    \label{lemma:pass-k1}
    The function $\calH_i(\cdot | \bsbeta_i)$ fulfills $\calH_i(k_1|\bsbeta_i)=k_1$ for all $\bsbeta_i \in \bbR^{D}$.
\end{lemma}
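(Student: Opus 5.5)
The plan is to evaluate the spline in Eq.~(\ref{eq:onion-spline}) directly at $\tilde y_i = k_1$. There, only a few basis functions are nonzero, and their coefficients are exactly the ones fixed by construction, so they do not depend on $\bsbeta_i$. Write $c_j = \gamma_{1,i} + \sum_{\ell=2}^j \exp(\gamma_{\ell,i})$ for the B-spline control points. Then $\calH_i(k_1|\bsbeta_i) = \sum_j B_j(k_1)\, c_j$.

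First I will fix the indexing of the cubic basis on the equidistant knots $k_{-2} < \dots < k_{m+3}$. There are $J+4$ knots, so the natural convention is that $B_j$ is supported on $[k_{j-3}, k_{j+1}]$. This puts $B_1$ on $[k_{-2},k_2]$, $B_2$ on $[k_{-1},k_3]$, $B_3$ on $[k_0,k_4]$, and $B_4$ on $[k_1,k_5]$. This convention matches Figure~\ref{fig:onion-knots}, where the control points are aligned with $k_0, k_1, \dots$. At the knot $k_1$, exactly three cubic B-splines are nonzero: $B_1$, $B_2$ and $B_3$. The function $B_4$ vanishes there, because $k_1$ is the left end of its support and a cubic B-spline is $C^2$ with a triple zero at that point.

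Second, I will use the standard fact that, for uniform knots, the nonzero values of cubic B-splines at a knot are $1/6$, $4/6$ and $1/6$. The weight $4/6$ belongs to the basis centred at that knot, here $B_2$, whose support $[k_{-1},k_3]$ has midpoint $k_1$. This gives $\calH_i(k_1|\bsbeta_i) = \tfrac16 c_1 + \tfrac46 c_2 + \tfrac16 c_3$.

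Third, I will plug in the fixed coefficients. These are $\gamma_{1,i}=k_0$ and $\gamma_{2,i}=\gamma_{3,i}=\log(k)$, which give $c_1=k_0$, $c_2=k_0+k=k_1$ and $c_3=k_0+2k=k_2$. The weighted average is therefore $(k_0+4k_1+k_2)/6 = (6k_1 + (k_0 - 2k_1 + k_2))/6 = k_1$, because the knots are equidistant. Since only $\gamma_{1,i},\gamma_{2,i},\gamma_{3,i}$ enter this expression, the result holds for every $\bsbeta_i\in\bbR^D$. The $\bsbeta_i$ only affect $\gamma_{5,i},\dots,\gamma_{J-3,i}$ through Eq.~(\ref{eq:onion-coefficients}).

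The main obstacle is the bookkeeping in the first step: pinning down which index $j$ corresponds to which support interval, given the paper's knot labelling. If the convention were shifted by one, the active bases at $k_1$ would be $B_2,B_3,B_4$. In that case I would need $c_4 = k_3$ (which still holds, since $\gamma_{4,i}=\log k$ is fixed) and a matching shift of $c_1$. I will settle this by checking consistency with the identity behaviour required at $k_1$, and by using the alignment of control points with the knots $k_0,\dots,k_{m+1}$ shown in Figure~\ref{fig:onion-spline}a.
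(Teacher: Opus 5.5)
Your proposal is correct and follows the paper's proof. Both arguments restrict to $B_1,B_2,B_3$ at $k_1$, which is the same indexing convention you settled on, and then insert the fixed coefficients $\gamma_{1,i}=k_0$ and $\gamma_{2,i}=\gamma_{3,i}=\log(k)$, so $\bsbeta_i$ never enters; the only difference is that the paper evaluates the weighted sum using partition of unity and the identity $\sum_{j=1}^4 jB_j(k_1)=2$, whereas you use the explicit weights $\tfrac16,\tfrac46,\tfrac16$, which is the same computation.
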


\begin{proof}
    Due to the localized definition of B-splines, the only nonzero basis functions in $\calH_i(k_1|\bsbeta_i)$ are $B_1$, $B_2$, and $B_3$. Using this fact, and inserting our definitions $k_0 = \gamma_{1,i}$ and $k=\exp(\gamma_{j,i})$ for $j \in \{2,3\}$, we can write
    $\calH_i(k_1|\bsbeta_i) = \sum_{j=1}^3 B_j(k_1)\left(k_0 + \sum_{\ell = 2}^j k\right)$.
    Now observe that, for any $x \in [k_1, k_m]$, B-splines have the property of unity decomposition, meaning that $\sum_{j=1}^JB_j(x) = 1$. Since only the first three bases are nonzero when evaluating at $k_1$, this implies that $\sum_{j=1}^3 B_j(k_1) = 1$, such that we can simplify to $\calH_i(k_1|\bsbeta_i)=k_0+k\sum_{j=1}^4 j B_j(k_1)-k$.
    By way of the recursive definition of B-spline bases \citep[p. 429]{Fahrmeir2013-RegressionModelsMethods}, we can now see that we have $\sum_{j=1}^4 j B_j(k_1) = 2$, such that $\calH_i(k_1|\bsbeta_i) = k_0 + k = k_1$ as desired.
\end{proof}

\begin{lemma}[Unit slope on {$[k_1, a]$} and {$[b, k_m]$}]
    \label{lemma:unitslope-k1a}
    For any $x \in [k_1, a]$, the function $\calH_i(\cdot | \bsbeta_i)$ fulfills $\frac{\partial }{\partial x}\calH_i(x|\bsbeta_i)=1$ for all $\bsbeta_i \in \bbR^{D}$.
\end{lemma}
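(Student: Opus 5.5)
The plan is to differentiate the B-spline representation of $\calH_i$ directly and use the fact that, on the two boundary knot intervals, only coefficients fixed to $\log(k)$ enter the derivative. By construction $a = k_2$ and $b = k_{m-1}$, so $[k_1,a]$ and $[b,k_m]$ are each a single knot interval. Write the control points as $c_j = \gamma_{1,i} + \sum_{\ell=2}^j \exp(\gamma_{\ell,i})$, so that $\calH_i(x) = \sum_{j=1}^J B_j(x) c_j$ on $[k_1,k_m]$ and $c_j - c_{j-1} = \exp(\gamma_{j,i})$ for $j\ge 2$.

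First I would invoke the standard derivative formula for B-splines on an equidistant knot grid with spacing $k$ \citep[cf.][]{Fahrmeir2013-RegressionModelsMethods}. It says that the derivative of a cubic spline is a quadratic spline whose coefficients are the scaled first differences of the cubic coefficients:
$$\frac{\partial}{\partial x}\calH_i(x\mid\bsbeta_i) = \sum_{j} \frac{c_j - c_{j-1}}{k}\, B^{(2)}_{j}(x) = \sum_j \frac{\exp(\gamma_{j,i})}{k}\, B^{(2)}_{j}(x),$$
where the $B^{(2)}_j$ are quadratic B-spline bases on the same knots, indexed consistently with the cubic ones.

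Second, I would pin down the supports under the paper's indexing, in which $B_j$ is supported on $[k_{j-3},k_{j+1}]$. This indexing is the one consistent with Lemma~\ref{lemma:pass-k1}, where only $B_1,B_2,B_3$ are nonzero at $k_1$. On $(k_1,k_2)$ the nonzero cubic bases are then $B_1,\dots,B_4$. Hence only the differences $c_2-c_1$, $c_3-c_2$ and $c_4-c_3$ contribute, each multiplying one of the three quadratic bases that are nonzero on that interval. These differences are $\exp(\gamma_{j,i})$ for $j\in\{2,3,4\}$, which are fixed at $k$ independently of $\bsbeta_i$. The derivative on $[k_1,a]$ therefore equals $\frac{k}{k}\sum B^{(2)}_j(x)$, summed over those three quadratic bases. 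By the partition of unity for quadratic B-splines this sum is $1$.

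The interval $[b,k_m]=[k_{m-1},k_m]$ is handled symmetrically. There the active cubic bases are $B_{J-3},\dots,B_J$, using $m=J-2$. The relevant differences are $\exp(\gamma_{j,i})$ for $j\in\{J-2,J-1,J\}$, which are again fixed at $k$.

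I expect the main obstacle to be the index bookkeeping. I need to verify that the quadratic bases appearing on each boundary interval are exactly those paired with the fixed increments, and that none of the free $\gamma_{5,i},\dots,\gamma_{J-3,i}$ appears there. This depends on the convention for the support offset, which I would fix once, checking it against Lemma~\ref{lemma:pass-k1}, and then apply at both ends. Continuity of the derivative at the interval endpoints follows from the $C^2$ smoothness of cubic splines, so the closed intervals are covered as well.
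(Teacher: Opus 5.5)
Your proposal is correct and takes essentially the same route as the paper. The paper writes the derivative on $[k_1,a]$ as $\sum_{j=2}^{4} B_j^{(2)}(x)\exp(\gamma_{j,i})/k$, taking this from Lemma~B.1 of the earlier penalized-transformation-model paper instead of the general B-spline derivative formula, then inserts $\exp(\gamma_{j,i})=k$, applies partition of unity, and treats $[b,k_m]$ analogously; your support bookkeeping ($B_j$ supported on $[k_{j-3},k_{j+1}]$, giving the quadratic bases $j\in\{2,3,4\}$ on $[k_1,k_2]$ and $j\in\{J-2,J-1,J\}$ on $[k_{m-1},k_m]$) checks out and makes explicit what the paper imports from that lemma.
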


\begin{proof}
    By Lemma B.1 in \citet{Brachem2025-BayesianPenalizedTransformation}, the derivative of $\calH_i(x|\bsbeta_i)$ for $x \in [k_1, a]$ can be written as
    $\frac{\partial}{\partial x} \calH_i(x | \bsbeta_i) = \sum_{j = 2}^{4} B_{j}^{(2)}(x) \exp(\gamma_{j,i}) / k$, where $B_j^{(2)}$ denotes bases of order two, and we assume the same knot sequence as above. Inserting $\exp(\gamma_{j,i}) = k$ for $j\in \{2, 3, 4\}$, we note that $\frac{\partial}{\partial x} \calH_i(x | \bsbeta_i) = \sum_{j = 2}^{4} B_{j}^{(2)}(x)=1$ for any $x \in [k_1, a]$, by the unity decomposition property of B-spline bases. The proof for $x \in [b, k_m]$ is completely analogous.
\end{proof}

\begin{lemma}[Average slope of one over {$[a, b]$}]
    \label{prop:avgslope-ab}
    The function $\calH_i(x | \bsbeta_i)$ has an average slope of one over the interval $x \in [a, b]$, i.e. for all $\bsbeta_i \in \bbR^D$ it fulfills $\frac{1}{b-a}\int_a^b \frac{\partial}{\partial x} \calH_i(x | \bsbeta_i) \mathrm d x = 1.$
\end{lemma}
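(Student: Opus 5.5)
By the fundamental theorem of calculus, the average slope equals $(\calH_i(b\mid\bsbeta_i)-\calH_i(a\mid\bsbeta_i))/(b-a)$. It therefore suffices to show
\[
\calH_i(b\mid\bsbeta_i)-\calH_i(a\mid\bsbeta_i)=b-a
\]
for every $\bsbeta_i$. I would prove this by showing $\calH_i(a\mid\bsbeta_i)=a$ and $\calH_i(b\mid\bsbeta_i)=b$.

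The left endpoint follows from the earlier lemmas. Lemma~\ref{lemma:pass-k1} gives $\calH_i(k_1\mid\bsbeta_i)=k_1$, and Lemma~\ref{lemma:unitslope-k1a} gives unit slope on $[k_1,a]$. Integrating the slope from $k_1$ to $a=k_2$ yields $\calH_i(a\mid\bsbeta_i)=a$. This holds independently of $\bsbeta_i$, since only the fixed coefficients $\gamma_{1,i},\dots,\gamma_{4,i}$ enter on this interval.

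For the right endpoint I would argue symmetrically from $k_m$. First, establish the analogue of Lemma~\ref{lemma:pass-k1}, namely $\calH_i(k_m\mid\bsbeta_i)=k_m$. At $x=k_m$ only the last three bases $B_{J-2},B_{J-1},B_J$ are nonzero, and they carry the control points $c_j=\gamma_{1,i}+\sum_{\ell=2}^j\exp(\gamma_{\ell,i})$.

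The key computation is the telescoping sum of increments:
\[
\sum_{\ell=2}^{J}\exp(\gamma_{\ell,i})=3k+(m-5)k+3k=(m+1)k .
\]
Here the three fixed increments on each side contribute $3k$ each. By Eq.~(\ref{eq:onion-coefficients}), the free increments satisfy
\[
\sum_{j=5}^{J-3}\exp(\gamma_{j,i})=\sum_{d=1}^D \exp(\beta_{i,d})\cdot\frac{(m-5)k}{\sum_d\exp(\beta_{i,d})}=(m-5)k,
\]
with $D=J-7=m-5$ entries. Hence $c_J=k_0+(m+1)k=k_{m+1}$, $c_{J-1}=k_m$ and $c_{J-2}=k_{m-1}$. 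These are exactly the control points of the identity spline, shifted the same way as at the left end. The same partition-of-unity argument as in Lemma~\ref{lemma:pass-k1}, run in mirror image, then gives $\calH_i(k_m\mid\bsbeta_i)=k_m$.

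With that established, Lemma~\ref{lemma:unitslope-k1a} on $[b,k_m]$ gives $\calH_i(b\mid\bsbeta_i)=k_m-(k_m-b)=b$. Subtracting the two endpoint values gives $b-a$, and dividing by $b-a$ gives average slope one.

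I expect the main obstacle to be the index bookkeeping. One must check that the telescoped control points at the right end line up exactly with $k_{m-1},k_m,k_{m+1}$ under the stated conventions $m=J-2$ and $b=k_{m-1}$. The factor $(m-5)k$ in the normalization is precisely what makes this work, so the count of fixed versus free increments must be verified carefully.
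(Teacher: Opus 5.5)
Your proof is correct, but it takes a different route from the paper. The paper never evaluates $\calH_i$ at the right end of the spline. Instead it imports a closed form for the average derivative on a single segment $[k_j,k_{j+1}]$ (Lemma B.2 of \citet{Brachem2025-BayesianPenalizedTransformation}), namely $k^{-1}\bigl(\tfrac{1}{6}\exp(\gamma_{j+1})+\tfrac{2}{3}\exp(\gamma_{j+2})+\tfrac{1}{6}\exp(\gamma_{j+3})\bigr)$. It sums this over the $m-3$ segments from $k_2=a$ to $k_{m-1}=b$ and notes that the fixed increments $\gamma_3,\gamma_4$ and $\gamma_{J-2},\gamma_{J-1}$ contribute exactly $k$ on each side. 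It then finishes with the normalization $\sum_{j=5}^{J-3}\exp(\gamma_{j,i})=(m-5)k$, which gives $(m-3)k/\bigl((m-3)k\bigr)=1$.

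You use the same normalization identity, but through the fundamental theorem of calculus and the telescoped control points. The total rise $(m+1)k$ places $c_{J-2},c_{J-1},c_J$ at $k_{m-1},k_m,k_{m+1}$, and the knot values $\tfrac{1}{6},\tfrac{2}{3},\tfrac{1}{6}$ then give $\calH_i(k_m)=k_m$. Your route is more elementary, since no per-segment integration formula is needed. It also inverts the paper's logical order. The paper obtains $\calH_i(b)=b$ in Lemma~\ref{lemma:pass-ab} as a consequence of the present lemma. You prove both endpoint identities first, using only Lemmas~\ref{lemma:pass-k1} and~\ref{lemma:unitslope-k1a}, so there is no circularity, and then read off the average slope as a corollary.

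Your control-point observation can be pushed one step further. Since also $c_{J-3}=k_{m-2}$, and only $B_{J-3},\dots,B_J$ are nonzero on $[k_{m-1},k_m]$, $\calH_i$ coincides there with the identity spline $\sum_j k_{j-1}B_j$. So neither the mirror image of Lemma~\ref{lemma:pass-k1} nor Lemma~\ref{lemma:unitslope-k1a} is strictly needed at the right end. The paper's route, in exchange, gives the average slope on every individual segment. That shows explicitly how the free increments redistribute slope inside $[a,b]$ while keeping its total fixed.
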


\begin{proof}
    To reduce notational burden, we suppress the location index $i$ for the coefficients $\beta$ and $\gamma$ in this proof, which can be done without loss of generality.
    By Lemma B.2 in the appendix of \citet{Brachem2025-BayesianPenalizedTransformation}, the average derivative in a segment $x \in [k_j, k_{j+1}]$ of a monotonically increasing B-spline defined on equidistant knots as given in Eq. (3) of the main text is known in closed form as
    \begin{equation}
        \frac{1}{k} \int_{k_j}^{k_{j+1}} \frac{\partial}{\partial x} \calH_i(x) \mathrm d x = \frac{1}{k} \left(
        \frac{1}{6} \exp(\gamma_{j+1}) + \frac{2}{3} \exp(\gamma_{j+2}) + \frac{1}{6} \exp(\gamma_{j+3})
        \right).
    \end{equation}
    Since $a = k_2$ and $b=k_{m-1}$, we can obtain the average slope over the interval $[a, b]$ by averaging over the local averages of all function segments $[k_j, k_{j+1}]$ for $j=2, 3, \dots, m-2$:
    \begin{align}
        \frac{1}{b-a} \int_{a}^{b} \frac{\partial}{\partial x} \calH_i(x) \mathrm d x
        & = \frac{1}{k_{m-1}-k_2} \int_{k_2}^{k_{m-1}} \frac{\partial}{\partial x} \calH_i(x) \mathrm d x             \\
        & = \frac{1}{k(m-3)} \sum_{j=2}^{m-2} \int_{k_j}^{k_{j+1}} \frac{\partial}{\partial x} \calH_i(x) \mathrm d x \\
         & =\frac{1}{k(m-3)} \sum_{j=2}^{m-2} \left(
        \frac{1}{6} \exp(\gamma_{j+1}) + \frac{2}{3} \exp(\gamma_{j+2}) + \frac{1}{6} \exp(\gamma_{j+3})
        \right)                                                                   \\
         & = \frac{1}{k(m-3)} \left(2k + \sum_{j=5}^{J-3} \exp(\gamma_j) \right).
    \end{align}
    Note that $k_{m-1}-k_2 = k(m-3)$.
    For rearranging the sum on the last line, we observe that we fixed $\gamma_j = \log(k)$ for $j \in \{2,3,4\} \cup \{J-2, J-1, J\}$. The simplified sum on the last line then becomes possible by making use of the fact that the sum contains the terms $\exp(\gamma_3)/6 + 2\exp(\gamma_4)/3 + \exp(\gamma_4)/6 = k$ and $\exp(\gamma_{J-1})/6 + 2\exp(\gamma_{J-2})/3 + \exp(\gamma_{J-2})/6 = k$. For the remaining $\gamma_5, \gamma_6, \dots, \gamma_{J-3}$, the fractions sum up to one.
    Now recall that, for $j \in \{5,6, \dots, J-3\}$, we have defined
    $\gamma_{j} = \beta_{j-4} -
        \log \left(
        ((m-5)k)^{-1}\sum_{d=1}^D \exp(\beta_{d})
        \right)$,
    with $\beta_d \in \bbR$ for $d = 1, \dots D$ and $D = J - 7$.
    We can thus write
    \begin{align}
        \sum_{j=5}^{J-3} \exp(\gamma_j) & =
        \sum_{d=1}^D \exp \left(
        \beta_d -
        \log \left(
            \frac{\sum_{d=1}^D \exp(\beta_d)}{(m-5)k} \right)
        \right)                                               \\
                                        & =
        \sum_{d=1}^D \exp (\beta_d)
        \left(
        \frac{(m-5)k}{\sum_{d=1}^D \exp(\beta_d)} \right) \\
        &= (m-5)k
    \end{align}
    and insert this expression back to arrive at $(b-a)^{-1} \int_{a}^{b} \frac{\partial}{\partial x} \calH_i(x) \mathrm d x  = (k(m-3))^{-1}k(m-5 + 2) = 1$,
    which concludes the proof.
\end{proof}

\begin{lemma}[$\calH_i$ passes through $a$ and $b$]
    \label{lemma:pass-ab}
    The function $\calH_i(\cdot | \bsbeta_i)$ fulfills $\calH_i(a|\bsbeta_i)=a$ and $\calH_i(b|\bsbeta_i)=b$ for all $\bsbeta_i \in \bbR^{D}$.
\end{lemma}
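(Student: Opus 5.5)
We obtain $\calH_i(a)=a$ by combining Lemma~\ref{lemma:pass-k1} with Lemma~\ref{lemma:unitslope-k1a}. Then $\calH_i(b)=b$ follows from Lemma~\ref{prop:avgslope-ab}. Since $\calH_i$ is a cubic spline, it is continuously differentiable on $[k_1,k_m]$, so the fundamental theorem of calculus applies on each subinterval.

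\textbf{Step 1 (the point $a$).} Recall that $a=k_2$, so $[k_1,a]$ is a single knot interval of length $k$. By Lemma~\ref{lemma:unitslope-k1a}, the derivative of $\calH_i(\cdot\mid\bsbeta_i)$ equals one on all of $[k_1,a]$, for every $\bsbeta_i\in\bbR^D$. Integrating and using Lemma~\ref{lemma:pass-k1} gives
\[
\calH_i(a\mid\bsbeta_i)=\calH_i(k_1\mid\bsbeta_i)+\int_{k_1}^{a}1\,\mathrm dx = k_1+(a-k_1)=a .
\]

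\textbf{Step 2 (the point $b$).} Recall that $b=k_{m-1}$. By Lemma~\ref{prop:avgslope-ab},
\[
\int_a^b \frac{\partial}{\partial x}\calH_i(x\mid\bsbeta_i)\,\mathrm dx=b-a .
\]
Combining this with Step 1 yields $\calH_i(b\mid\bsbeta_i)=\calH_i(a\mid\bsbeta_i)+(b-a)=b$. Neither step depends on the value of $\bsbeta_i$, so both identities hold for all $\bsbeta_i\in\bbR^D$.

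The argument is short, and the main thing to verify is that the earlier lemmas cover exactly the intervals used here. Lemma~\ref{lemma:unitslope-k1a} is stated on $[k_1,a]$ and Lemma~\ref{prop:avgslope-ab} on $[a,b]$, which matches the intervals above once $a=k_2$ and $b=k_{m-1}$ are substituted. The only other point to confirm is that $\calH_i$ has no jump at the interior knots $a$ and $b$. This holds because cubic B-splines are $C^2$ at simple knots, so $\calH_i$ is absolutely continuous and can be recovered by integrating its derivative. As a consistency check one can additionally integrate the unit slope on $[b,k_m]$ to get $\calH_i(k_m)=k_m$. This agrees with the identity map outside $[k_1,k_m]$, although it is not needed for the present statement.
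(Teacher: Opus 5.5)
Your proposal is correct and follows the paper's proof exactly. Lemma~\ref{lemma:pass-k1} combined with the unit slope from Lemma~\ref{lemma:unitslope-k1a} on $[k_1,a]$ gives $\calH_i(a\mid\bsbeta_i)=a$, and the average-slope identity of Lemma~\ref{prop:avgslope-ab} then carries this to $\calH_i(b\mid\bsbeta_i)=b$; your explicit appeal to the fundamental theorem of calculus, justified by the $C^2$ smoothness of the cubic spline, only spells out a step the paper leaves implicit.
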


\begin{proof}
    By Lemma \ref{lemma:pass-k1}, we have $\calH_i(k_1 | \bsbeta_i) = k_1$, and by Lemma \ref{lemma:unitslope-k1a}, the slope of $\calH_i(x|\bsbeta_i)$ for all $x \in [k_1, a]$ is one. As a result, we directly get $\calH_i(a|\bsbeta_i)=a$.
    Further, by Lemma~\ref{prop:avgslope-ab}, the average slope of $\calH_i(x|\bsbeta_i)$ over the interval $[a,b]$ is one, which, given that $\calH_i(a|\bsbeta_i)=a$ as we just noted, implies that $\calH_i(b|\bsbeta_i)=b$.
\end{proof}

\begin{proposition}[$\calH_i$ is identity outside a compact set]
    \label{prop:identity-tails}
    The function $\calH_i$ fulfills $\calH_i(\tilde y_i|\bsbeta_i)=\tilde y_i$ for all $\tilde y_i \in (-\infty, a] \cup [b, \infty)$ and for all $\bsbeta_i \in \bbR^{D}$.
\end{proposition}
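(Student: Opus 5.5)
The plan is to split the domain $(-\infty,a]\cup[b,\infty)$ into four pieces: $(-\infty,k_1)$, $[k_1,a]$, $[b,k_m]$ and $(k_m,\infty)$. On each piece I will show $\calH_i(\tilde y_i\mid\bsbeta_i)=\tilde y_i$ using the lemmas already established.

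The two outer pieces are immediate. By definition, $\calH_i$ is the identity on $(-\infty,k_1)\cup(k_m,\infty)$, for every $\bsbeta_i$.

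On $[k_1,a]$, Lemma~\ref{lemma:pass-k1} gives $\calH_i(k_1\mid\bsbeta_i)=k_1$, and Lemma~\ref{lemma:unitslope-k1a} gives unit derivative throughout $[k_1,a]$. The fundamental theorem of calculus then yields, for $x\in[k_1,a]$,
\[
\calH_i(x\mid\bsbeta_i)=k_1+\int_{k_1}^x 1\,\mathrm d t=x .
\]

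On $[b,k_m]$ the same argument runs from the other end. Lemma~\ref{lemma:pass-ab} gives $\calH_i(b\mid\bsbeta_i)=b$, and the second half of Lemma~\ref{lemma:unitslope-k1a} gives unit slope on $[b,k_m]$. Integrating from $b$ gives $\calH_i(x\mid\bsbeta_i)=x$ for $x\in[b,k_m]$. The four pieces together cover $(-\infty,a]\cup[b,\infty)$, which proves the proposition.

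The main obstacle is not the covering argument. It is making sure the derivative used in the fundamental theorem of calculus is legitimate at the endpoints, in particular at $k_m$, where the spline formula gives way to the identity. A cubic spline is $C^2$ on the interior, so on the closed intervals $[k_1,a]$ and $[b,k_m]$, where the spline formula applies, $\calH_i$ is continuously differentiable and the integration is valid. Continuity at the junctions $k_1$ and $k_m$ is not needed for the pointwise identity, because the values there come from the spline formula and have just been shown to equal $k_1$ and $k_m$. Consequently the identity pieces join up continuously, which is consistent with the claimed $C^2$ matching but is not required for this statement.
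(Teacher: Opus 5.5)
Your proposal is correct and follows essentially the same route as the paper: $\calH_i$ is the identity outside $[k_1,k_m]$ by definition, and on $[k_1,a]$ and $[b,k_m]$ an anchoring value combined with the unit slope from Lemma~\ref{lemma:unitslope-k1a} forces $\calH_i(x\mid\bsbeta_i)=x$. The only difference is cosmetic: you anchor the left piece at $k_1$ via Lemma~\ref{lemma:pass-k1} rather than at $a$ via Lemma~\ref{lemma:pass-ab}, and the paper proves Lemma~\ref{lemma:pass-ab} from exactly that anchoring.
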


\begin{proof}[Proof of Proposition~\ref{prop:identity-tails}]
    By Lemma \ref{lemma:pass-ab}, we have $\calH_i(a|\bsbeta_i)=a$ and $\calH_i(b|\bsbeta_i)=b$, and by Lemma \ref{lemma:unitslope-k1a}, $\calH_i(x|\bsbeta_i)$ has a unit slope for all $x \in [k_1,a] \cup [b, k_m]$. This means that $\calH_i(x|\bsbeta_i) = x$ for all $x \in [k_1,a] \cup [b, k_m]$. The statements above all hold for all $\bsbeta_i \in \bbR^{D}$. Since $\calH_i(x|\bsbeta_i) = x$ for $x < k_1$ and $x > k_m$ by definition, this concludes the proof.
\end{proof}

From a technical perspective, this result is important, because it shows that the transformation function's transition back to identity at both tails happens in the B-spline function segment, which means in turn that the transition is continuous and twice continuously differentiable, and that the function as a whole is continuous and twice continuously differentiable.
From a modeling perspective, it is important, because it allows us to think of $\calH_i$ as a flexible increasing B-spline on $(a, b)$ that transitions smoothly to the identity function beyond the flexible subdomain.
Since $\calH_i$ is a strictly monotonically increasing B-spline on $[k_1, k_m]$ by construction, this proposition also implies that $\calH_i(\tilde y_i | \bsbeta_i)$ is strictly monotonically increasing in $\tilde y_i$ for all $\tilde y_i \in \bbR$ and $\bsbeta_i \in \bbR^{D}$. Further, it allows us to note that the derivative of $\calH_i(\tilde y_i | \bsbeta_i)$ with respect to $\tilde y_i$ is a second-order B-spline
$(\partial / \partial \tilde y_i) \calH_i(\tilde y_i | \bsbeta_i) =
    \sum_{j=2}^{J} B_j^{(2)}(\tilde y_i)\exp(\gamma_{j,i}) / k$ for $\tilde y_i \in [k_1, k_m]$ and $1$ else, where $B^{(2)}_j$ are second-order B-spline bases defined on the same knots as Eq. (3) of the main text. Note that $(\partial / \partial \tilde y_i) \calH_i(\tilde y_i | \bsbeta_i)$ depends on $\bsbeta_i$ through $\gamma_{5,i}, \dots, \gamma_{J-3,i}$ as given in Eq. (4) of the main text.

\begin{proof}[Proof of Theorem~2.1]
    We can use the identities $\beta_{i,1} = \beta_{i,2} = \dots = \beta_{i,D} = \beta_i$ to simplify the definition of $\gamma_{j,i}$ for $j \in \{5, 6, \dots, J-3\}$ to $\gamma_{j,i} = \log(m-5) - \log(D) + \log(k)$.
    Since $m = D+5$, this expression reduces to $\gamma_{j,i} = \log(k)$, such that we now have $\gamma_{j,i} = \log(k)$ for \textit{all} $j = 2, \dots, J$. This allows us to apply Lemma \ref{lemma:unitslope-k1a} separately to all function segments $[k_j, k_{j+1}]$ for $j = 1, \dots, m-1$, showing that $\calH_i(x | \bsbeta_i)$ has a constant unit slope for all $x \in [k_1, k_m]$. When combining this observation with Proposition~\ref{prop:identity-tails}, which shows that $\calH_i(x | \bsbeta_i)$ is the identity function for $x \in (-\infty, k_2] \cup [k_{m-1}, \infty)$, the proof is complete.
\end{proof}

\begin{proof}[Proof of Theorem~2.2]
    Since both $\calH_i$ and $\calG_i$ are strictly monotonically increasing, the assumption $(\calH_i \circ \calG_i)(y_i) \sim \calN(0, 1)$ implies that
    $\bbP(y_i \leq t)  = \bbP\bigl((\calH_i \circ \calG_i)(y_i) \leq (\calH_i \circ \calG_i)(t)\bigr) = (\Phi \circ \calH_i \circ \calG_i)(t)$.
    Further, since $t \in (-\infty, q_a] \cup [q_b, \infty)$, we have $\calG_i(t) \in (-\infty, a] \cup [b, \infty)$.
    Applying Proposition \ref{prop:identity-tails}, this means that $(\calH_i \circ \calG_i)(t) = \calG_i(t)$, which now allows us to write $\bbP(y_i \leq t) = (\Phi \circ \calG_i)(t) = (\Phi \circ \Phi^{-1} \circ F_i)(t) = F_i(t)$,
    by substituting the definition of $\calG_i$.
\end{proof}

\clearpage
\section{Improper prior induced by a flat inverse-softplus parameterization}
\label{app:prior-tau2}

Let $\eta \in \mathbb{R}$ and define
$$
    \tau^2 = \{\operatorname{softplus}(\eta)\}^2 = \{\log(1 + \exp(\eta))\}^2, \qquad \tau > 0 .
$$
Assume an (improper) constant prior on the latent variable, $p(\eta) \propto 1$.
The inverse transformation is
$$
    \eta = \log\left(\exp(\tau) - 1\right),
$$
with derivative
$$
    \frac{d \eta}{d\tau} = \frac{\exp(\tau)}{\exp(\tau)-1}
    = \frac{1}{1 - \exp(-\tau)} .
$$
By change of variables, the induced improper prior on $\tau$ is therefore
$$
    p(\tau)
    \propto \left| \frac{d \eta}{d\tau} \right|
    = \frac{1}{1 - \exp(-\tau)},
    \qquad \tau > 0,
$$
and the induced improper prior on $\tau^2$ is
$$
    p(\tau^2)
    \propto p(\tau) \left| \frac{d \tau}{d \tau^2} \right|
    = \frac{1}{1 - \exp(-\sqrt{\tau^2})} \frac{1}{2 \sqrt{\tau^2}},
    \qquad \tau^2 > 0 .
$$
As $\tau^2 \to 0$, we also have $\tau \to 0$, and consequently $\exp(-\tau) \to 1$. Thus, in the improper prior above, both fractions become large as $\tau^2 \to 0$, such that small values of $\tau^2$ are favored by this prior. At the same time, as $\tau^2 \to \infty$, we have $\exp(-\tau) \to 0$, so that $(1 - \exp(-\sqrt{\tau^2}))^{-1} \to 1$. Then, for any $a > 0$, the tail probability mass implied by this prior as $\tau^2 \to \infty$ is
$$
    \int_a^\infty p(\tau^2)\, \mathrm d \tau^2 \geq \int_a^\infty \frac{1}{2\sqrt{\tau^2}} \, \mathrm d \tau^2 = \infty,
$$
which diverges. In summary, the flat prior on $\eta = \log(\exp(\tau) - 1)$ induces an improper prior on $\tau$ and $\tau^2$ that strongly favors small variances, but only weakly penalizes large variances due to a heavy right tail.

\clearpage
\section{Dominant computational costs}
\label{app:complexity}

\autoref{tab:complexity} summarizes the dominant computational complexity of the main model components assuming $\Ndim$ locations, $\Nobs$ independent samples of the full field, $M$ inducing locations for the low-rank approximations in the marginal models, $P$ parameters in each parametric marginal model $\calG_i$, and $D$ flexible spline-parameters in each semi-parametric marginal model $\calH_i$. Lower-order terms and constant factors are omitted. Spline evaluations in $\calH$ are accelerated using linear interpolation on a precomputed regular grid of size $G=1000$, reducing basis-function evaluation to a binary search followed by interpolation. The derivative of the spline reuses the same grid and search results, so derivative evaluation introduces only an additional matrix multiplication and therefore has the same asymptotic dependence on $\Nobs$, $\Ndim$, and $D$. See \citet{Katzfuss2023-ScalableBayesianTransport} for details on the transport map complexity. The semi-parametric marginal layer is inverted numerically using cubic interpolation on a regular grid via the Python library Interpax \citep{Conlin2026-Interpax}.

Assuming $\Ndim > M > \Nobs > D > m > \log G > P$, the overall time complexity of a single evaluation of the unnormalized log posterior is $\mathcal{O}(\Ndim D(M + \Nobs) + M^3P)$ for our marginal model and $\mathcal{O}(\Ndim\Nobs^3)$ for the Bayesian transport-map dependence model.

\begin{table}[h]
\centering
\caption{Dominant computational complexity of the main model components; $m$ denotes the size of the conditioning set in the posterior transport map, see Section~3.2 of the main text. The overview assumes a Gaussian model for $\calG$.}
\label{tab:complexity}
\footnotesize
\begin{tabular}{ll}
\toprule
Component & Complexity \\ 
\midrule

\multicolumn{2}{l}{\textsf{\textbf{Marginal Model}}} \\[4pt]

Prior for $\bszeta$ (low-rank, whitened)
& $\mathcal{O}(PM)$ \\[4pt]

Approximation of $\bszeta$ 
& $\mathcal{O}(P\Ndim M + PM^3)$ \\[4pt]

Prior for $\bsbeta$ (low-rank, whitened)
& $\mathcal{O}(DM)$ \\[4pt]

Approximation of $\bsbeta$
& $\mathcal{O}(\Ndim MD + M^3)$ \\[4pt]

Evaluation of $\calG$
& $\mathcal{O}(\Nobs\Ndim)$ \\[4pt]

Evaluation of $\partial \calG / \partial y$
& $\mathcal{O}(\Nobs\Ndim)$ \\[4pt]

Evaluation of $\calH$
& $\mathcal{O}(\Nobs\Ndim\log(G) + \Nobs\Ndim D)$ \\[4pt]

Evaluation of $\partial \calH / \partial \tilde y$
& $\mathcal{O}(\Nobs\Ndim D)$ \\[4pt] \addlinespace

\multicolumn{2}{l}{\textsf{\textbf{Dependence Model}}} \\[4pt]

Evaluation of Gaussian reference density
& $\mathcal{O}(\Nobs\Ndim)$ \\[4pt]

Evaluation of  $p(\hat{\tilde \bfZ}|\bstheta^\calT)$
& $\mathcal{O}(\Ndim\Nobs^3)$ \\[4pt]

Evaluation of $\widetilde \calT$
& $\mathcal{O}(\Ndim(\Nobs^3 + m\Nobs^2))$ \\[4pt]

Application of $\widetilde \calT$
& $\mathcal{O}(\Ndim (\Nobs^2 + m\Nobs))$ \\[4pt] \addlinespace

\multicolumn{2}{l}{\textsf{\textbf{Inversion}}} \\[4pt]

Inverse $\calG^{-1}$
& $\mathcal{O}(\Nobs\Ndim)$ \\[4pt]

Inverse $\calH^{-1}$
& $\mathcal{O}(\Nobs\Ndim)$ \\[4pt]

Inverse transport map $\widetilde{\calT}^{-1}$
& $\mathcal{O}(\Ndim (\Nobs^2 + m\Nobs))$ \\

\bottomrule
\end{tabular}
\end{table}

\clearpage
\section{Potential use for compressed storage}
\label{app:conditional-generation-compression}

From a storage perspective, the marginal layers are comparatively compact after fitting, because they are represented by parameter fields of a total size of $M(D+P)$ elements. The fitted Bayesian transport map is less compact: its closed-form posterior map is evaluated through GP-prediction equations that depend on the training ensemble and on an $\Nobs \times \Nobs$ matrix for each ordered location \citep[Proposition~1]{Katzfuss2023-ScalableBayesianTransport}. Thus, an implementation must either retain the training ensemble and recompute or refactor these matrices as needed, or store their Cholesky factors or equivalent linear-solve quantities.
Compared with storage-oriented stochastic generators that retain a compact parameter set for unconditional generation of new ensemble-like simulations \citep{Huang2023-SavingStorageClimate,Song2024-EfficientStochasticGenerators}, this dependence-layer storage requirement makes our fitted model less suitable as a standalone compressed representation of the training ensemble.

Nevertheless, once our surrogate model has been trained, additional spatial fields can be represented in compressed stochastic form by storing only a small number of leading ordered locations or map coefficients and then drawing conditional samples for the remaining locations, as \citet{Katzfuss2023-ScalableBayesianTransport} illustrate for Bayesian transport maps. This use is closer in spirit to conditional compression, where selected information from a target field is stored and the remaining field is modeled conditionally \citep{Guinness2018-CompressionConditionalEmulation}. While the fitted model therefore suggests a secondary use case for possible stochastic compression of additional fields beyond the training data, we do not present it here as a standalone compression method for replacing the original training ensemble.

Figure~\ref{app-fig:conditional-samples-compression} illustrates this idea. For these plots,
we take one held-out global log-precipitation-rate field and draw new samples, conditioning on the observed values at the first $50$, $500$, or $5000$ maximin-ordered locations of the test field.
Since early locations in the ordering are spatially dispersed, even conditioning on just the first $50$ locations constrains broad spatial features, while larger conditioning sets preserve increasingly local information. For the trimmed global grid with $54{,}722$ locations, retaining $50$, $500$, or $5000$ locations corresponds to approximately $0.09\%$, $0.91\%$, or $9.14\%$ of the field values before accounting for fitted-model overhead. Depending on the required fidelity to the observations, additional fields can therefore be stored at a small fraction of their size. Note that this is a stochastic conditional completion, not a deterministic reconstruction of the held-out field.

\begin{figure}[htb]
    \centering
    \includegraphics[width=\linewidth]{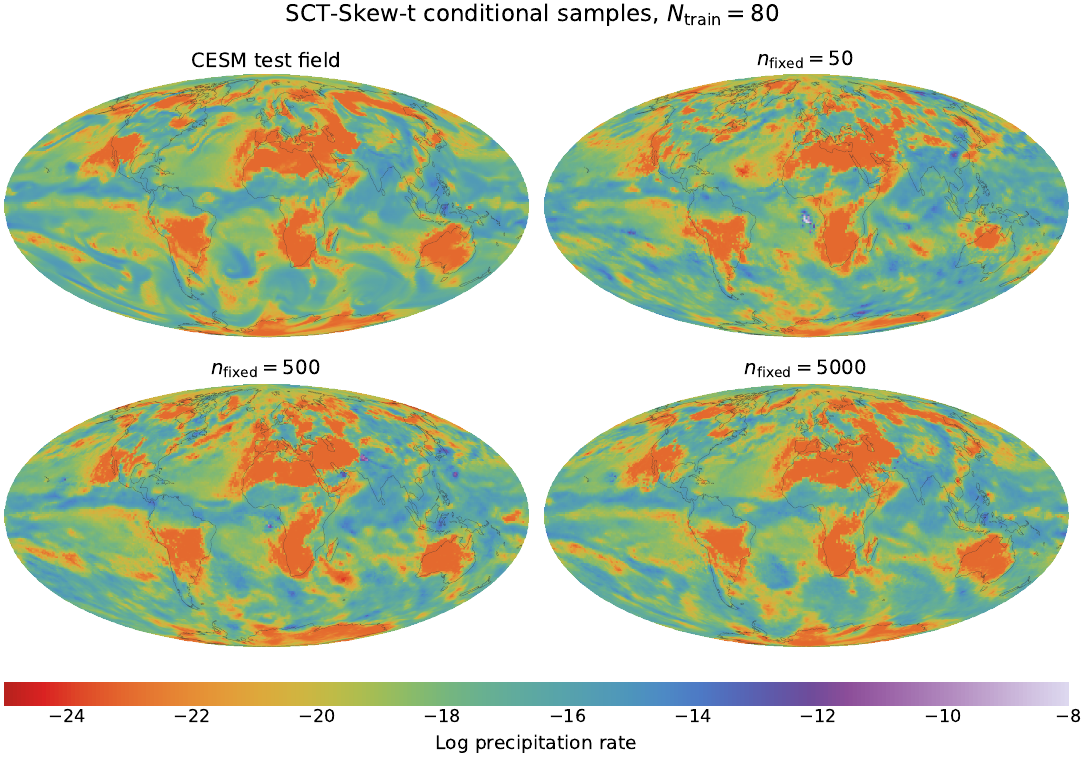}
    \caption{Illustration of conditional sampling as a possible stochastic-compression use case. The bottom right panel shows one held-out CESM log-precipitation-rate field. The remaining panels show conditional samples from \texttt{SCT-Skew-t} trained on $\Ntrain=80$ fields after fixing the first 50, 500, or 5000 locations in the maximin ordering.}
    \label{app-fig:conditional-samples-compression}
\end{figure}

\clearpage
\section{Adjusted log scores}
\label{app:shifted-logscores}

In Figure~5 of the main text, we show log scores taken from Figure 10 in \citet{Katzfuss2023-ScalableBayesianTransport}, but shift them for a fair comparison.
This shift is necessary, because the authors applied their models to a version of the data that was preprocessed by centering and scaling at each location based on the location-specific sample mean and standard deviation of the full sample. This step somewhat resembles our definition of $\calG$ in \texttt{SCT-Gauss}, but without spatial smoothing. Since the preprocessing step was invariant among the models compared by \citet{Katzfuss2023-ScalableBayesianTransport}, it could be ignored in the computation of the log scores. Here, however, we consider $\calG$ as an integral part of the model and therefore have to account for it in the computation of the log scores. Consequently, we have to adjust the log scores for the alternative models by computing $-\text{E}[\log p(\bsy^*|\bfY)] + \sum_{i=1}^\Ndim \log s_i$, where $s_i$ is the sample standard deviation over $y_{i,1}, \dots, y_{i,\Nobs}$ at location $i$ and $-\text{E}[\log p(\bsy^*|\bfY)]$ is the log score reported by \citet{Katzfuss2023-ScalableBayesianTransport}. The log scores reported for our model runs are similarly adjusted for our z-standardization preprocessing step, but note that our z-standardization does not use location-specific means and standard deviations, but the global mean and standard deviation of the training sample.

\clearpage
\section{Marginal oversmoothing and transport-map compensation}
\label{app:marginal-oversmoothing-diagnostics}

\FloatBarrier
To separate marginal-layer smoothing from the effect of the dependence model, Figures~\ref{app-fig:oversmoothing-mollweide} and~\ref{app-fig:oversmoothing-regional-himalayas} compare three sets of marginal summaries: the CESM fields, samples from marginal-only models, and samples from the full SCT model. The marginal-only model, denoted \texttt{HG} in the figures, uses the fitted marginal transformations $\calH \circ \calG$ without the transport map. For both \texttt{HG} and \texttt{SCT-Skew-t}, the displayed summaries are computed from $1000$ generated fields at $\Ntrain=20$ and $\Ntrain=80$; the CESM summaries are computed from the available CESM fields.

These comparisons confirm that the potential for marginal oversmoothing is strong in the marginal-only models. The \texttt{HG} rows capture broad large-scale precipitation regimes, but they smooth sharp local transitions and regional texture, most visibly near complex topography and land-sea contrasts. The full \texttt{SCT-Skew-t} generator, however, restores much of this spatial detail in the displayed quantiles and mean, with the $\Ntrain=80$ fit giving the cleaner version. Thus, the transport map does more than add dependence: it also compensates for some residual marginal oversmoothing after the inverse marginal transformations. The regional Himalayas zoom in Figure~\ref{app-fig:oversmoothing-regional-himalayas} makes this effect particularly clear, because the marginal-only summaries are visibly rounded across the mountain region, whereas the full generator recovers more of the spatially abrupt structure visible in the CESM summaries.

\begin{figure}[hb!]
    \centering
    \includegraphics[width=\linewidth]{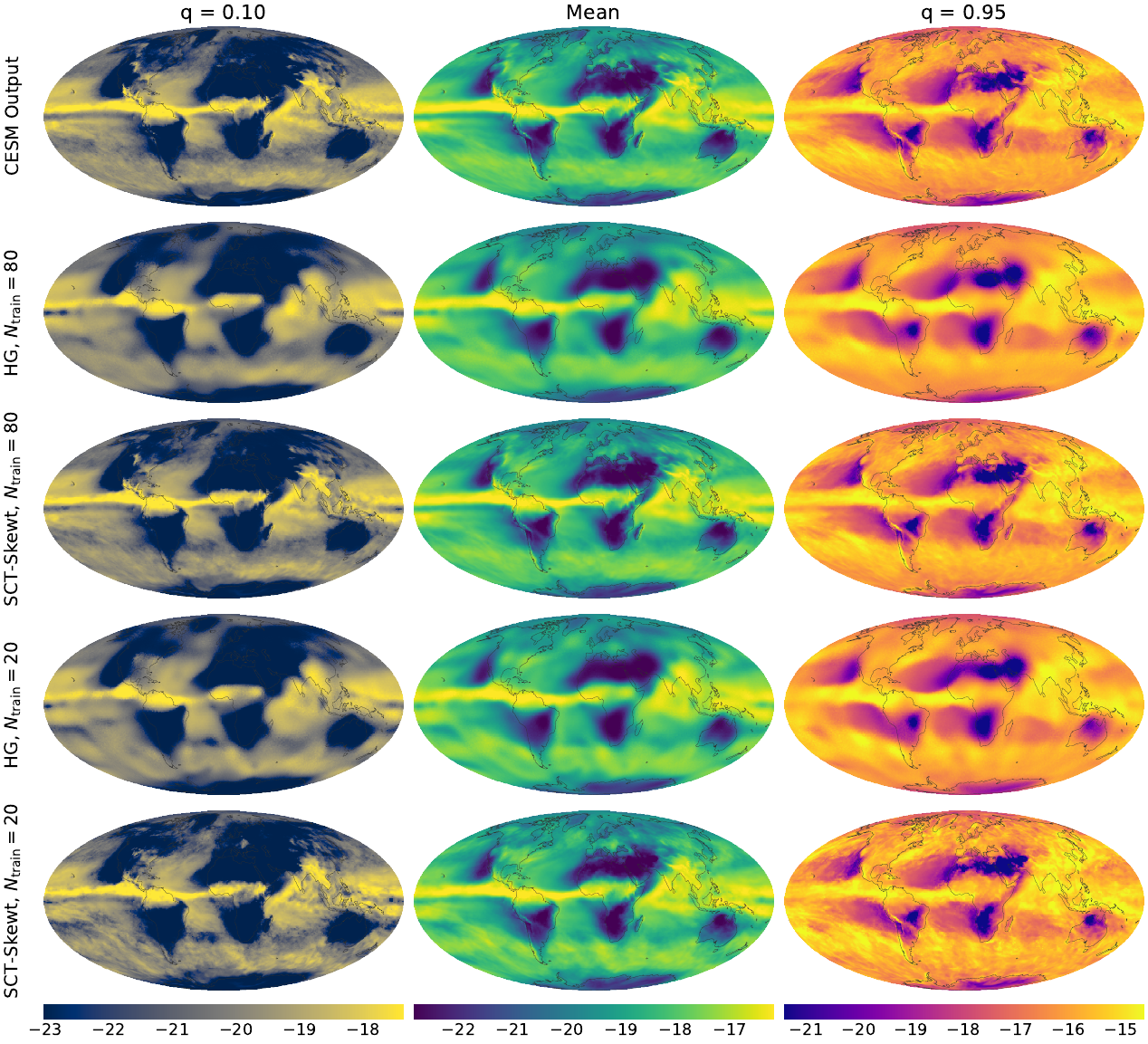}
    \caption{Global comparison of marginal summaries for CESM output, the marginal-only \texttt{HG} model, and the full \texttt{SCT-Skew-t} generator (both of the latter two with $\Ntrain=80$ and $\Ntrain=20$). Columns show the empirical $0.10$ quantile, mean, and empirical $0.95$ quantile of log precipitation rate at each location. The model rows are based on $1000$ generated fields. The marginal-only \texttt{HG} model captures broad spatial regimes but visibly smooths localized structure; the full \texttt{SCT-Skew-t} generator restores much of the CESM-like spatial detail.}
    \label{app-fig:oversmoothing-mollweide}
\end{figure}

\begin{figure}[p]
    \centering
    \includegraphics[width=\linewidth]{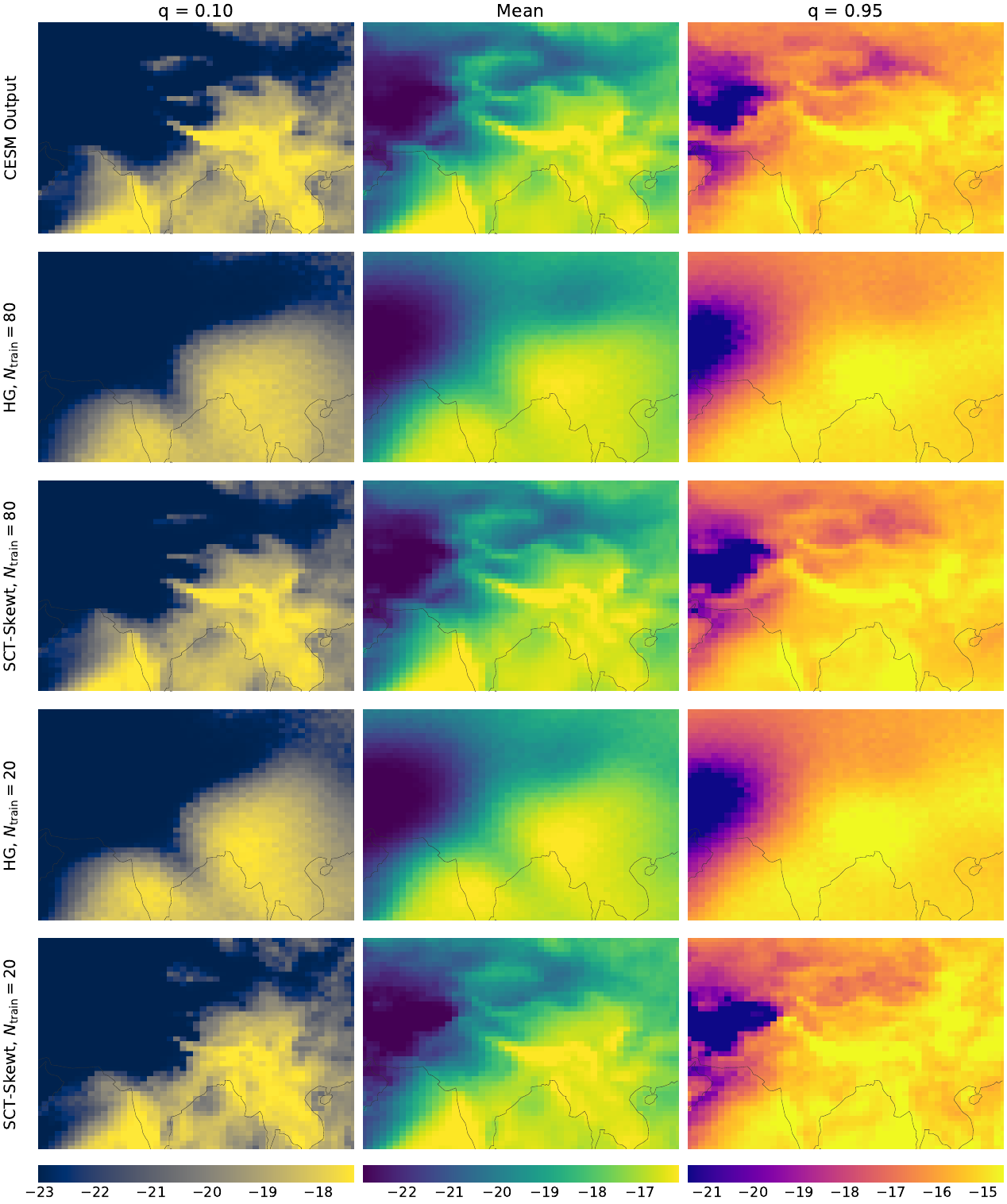}
    \caption{Regional version of Figure~\ref{app-fig:oversmoothing-mollweide} for the Himalayas and surrounding region. The marginal-only \texttt{HG} model smooths the abrupt spatial transitions associated with the mountain region, whereas the full \texttt{SCT-Skew-t} generator recovers more of the local CESM structure in the lower quantile, mean, and upper quantile.}
    \label{app-fig:oversmoothing-regional-himalayas}
\end{figure}

\FloatBarrier
\section{Long-distance focal-correlation diagnostic}
\label{app:long-distance-focal-correlation}

To complement the regional focal-correlation diagnostics in the main manuscript, Figure~\ref{app-fig:world-focal-correlation-pacific} applies the same idea on a global Pacific-centered map. The focal grid cell lies near New Guinea in the Maritime Continent, at longitude $137.5^\circ\mathrm{E}$ and latitude $4.24^\circ\mathrm{S}$, and is the fifth location in the maximin ordering. The pattern is similarly present for nearby locations that appear much later in the ordering (not shown). Each panel shows the empirical correlation between log precipitation rate at this focal location and all other grid cells. The CESM panel is computed from the 98 available July~1 CESM fields, while the model panels are computed from $1{,}000$ generated fields from the corresponding fitted generators.

This diagnostic illustrates that the sparse transport-map construction is not limited to local dependence. In the CESM panel, the focal location has a localized positive association around the Maritime Continent and weak, noisy remote structure across the tropical Pacific. The fitted generators reproduce the local correlation structure. Additionally, the model samples retain a remote contrast toward the central and eastern equatorial Pacific. The remote correlations are modest in magnitude, but they are directionally consistent across CESM and the generated samples.

The broad tropical-Pacific structure in Figure~\ref{app-fig:world-focal-correlation-pacific} provides one way to relate this diagnostic to known physical modes of variability. A positive association near the Maritime Continent together with weaker opposite-signed structure toward the central and eastern equatorial Pacific is qualitatively consistent with the west-east tropical-Pacific contrasts associated with El Ni\~no-Southern Oscillation (ENSO). In ENSO, changes in sea-surface temperature (SST) anomalies and atmospheric circulation link the western Pacific warm-pool region to the central and eastern equatorial Pacific. This comparison is only qualitative, however. The focal variable here is a single-grid-cell July~1 log precipitation rate, not an SST-based ENSO index or a regional ENSO average, and single-day log-precipitation-rate fields are strongly weather-dominated. Moreover, our fitted generators model a fixed distribution of spatial fields rather than a time-evolving process, whereas ENSO variability evolves over months to years. Thus, the figure cannot be read as evidence that the model has learned ENSO dynamics. It does suggest that the fitted generators can preserve a physically plausible long-distance log-precipitation-rate correlation pattern in the tropical Pacific, including a weak ENSO-consistent west-east contrast visible in the CESM fields.

\begin{figure}[htb]
    \centering
    \includegraphics[width=\linewidth]{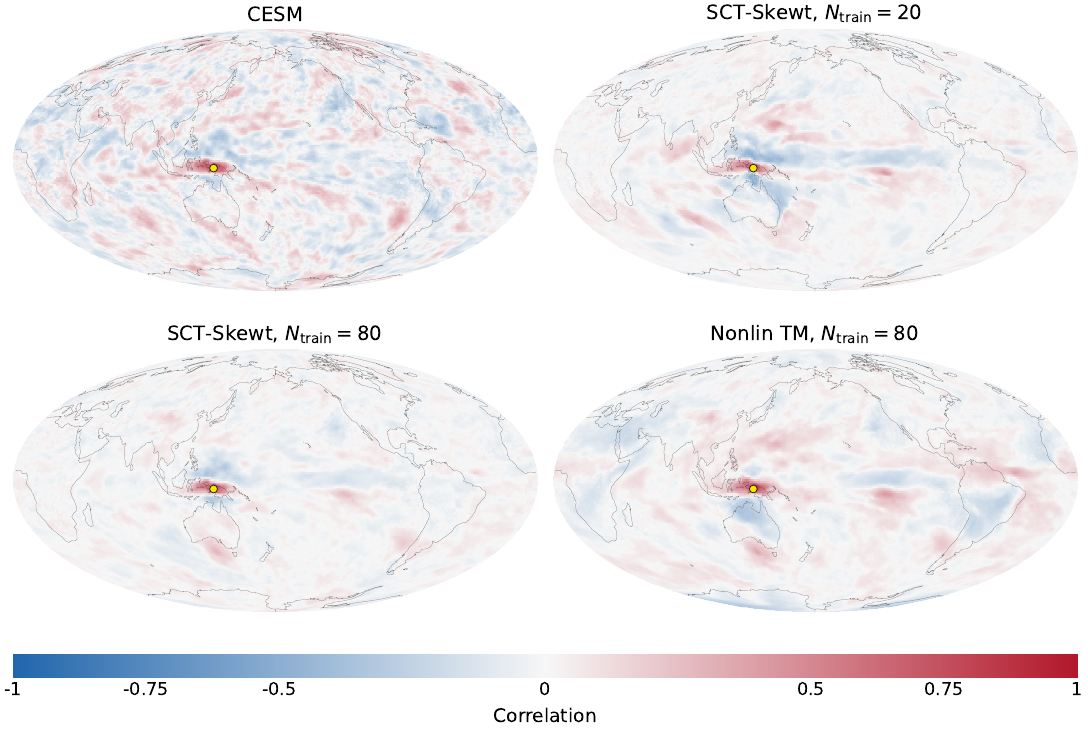}
    \caption{Global focal-correlation diagnostic for a western-Pacific / Maritime Continent focal location near New Guinea. Colors show empirical correlations between log precipitation rate at the focal grid cell, marked in yellow, and log precipitation rate at all other grid cells on a Pacific-centered Mollweide projection. Panels compare CESM, \texttt{SCT-Skew-t} with $\Ntrain=20$, \texttt{SCT-Skew-t} with $\Ntrain=80$, and \texttt{nonlin TM} with $\Ntrain=80$. The model panels are based on $1000$ generated fields.}
    \label{app-fig:world-focal-correlation-pacific}
\end{figure}

\clearpage
\section{Spatial marginal quantile diagnostics}
\label{app:marginal-quantile-diagnostics}

\FloatBarrier
To assess marginal fit systematically across space, Figure~\ref{app-fig:marginal-quantile-diagnostic} maps location-wise marginal quantile discrepancies. For each location $i$, we compute empirical quantiles $\hat q^{\text{CESM}}_{i,p}$ from the CESM fields and $\hat q^{\text{model}}_{i,p}$ from $1000$ fields generated by the fitted surrogate model, using 20 equally spaced probability levels $p \in [0.05, 0.95]$. We then summarize the discrepancy by
\[
    D_i =
    \frac{1}{20 \times s_{\text{global}}}
    \sum_{p \in \mathcal Q}
    \left|
    \hat q^{\text{model}}_{i,p}
    -
    \hat q^{\text{CESM}}_{i,p}
    \right|,
\]
where $\mathcal Q$ denotes this probability grid and $s_{\text{global}}$ is the global sample standard deviation of log precipitation rate. Thus, a value of $0.5$ means that the model's marginal quantiles differ from the CESM quantiles by about half a global standard deviation on average.

This diagnostic should be interpreted as a finite-ensemble calibration check rather than as an independent estimate of the true marginal quantile error. The empirical CESM quantiles are computed from only 98 available fields and therefore have sampling variability, especially near the tails. In addition, for fits with $\Ntrain=80$, most fields contributing to the empirical CESM reference are also used for training. The maps are nevertheless informative for comparing whether different generators reproduce the spatially varying marginal summaries visible in the available CESM ensemble; out-of-sample predictive performance is assessed separately by the log scores in the main text.

The spatial diagnostic is consistent with the interpretation from the log scores and threshold maps. \texttt{SCT-Skew-t} with $\Ntrain=20$ is comparable to \texttt{nonlin TM} trained on $\Ntrain=80$ fields: the mean (median) value of $D_i$ is $0.145$ ($0.121$) for \texttt{SCT-Skew-t} and $0.157$ ($0.143$) for \texttt{nonlin TM}, and \texttt{SCT-Skew-t} has a lower discrepancy at 58\% of locations. At $\Ntrain=80$, the improvement is clearer: \texttt{SCT-Skew-t} has lower discrepancies than \texttt{nonlin TM} at 95\% of locations, and the mean discrepancy is reduced from $0.157$ to $0.073$. At the same time, we note that the remaining \texttt{SCT-Skew-t} discrepancies for $\Ntrain=20$ show some spatial structure, which motivates the signed single-quantile diagnostics below.

The residual spatial structure, particularly for $\Ntrain=20$, should therefore be interpreted as structured marginal misfit rather than direct evidence of oversmoothing in the full generator. In light of the preceding marginal-only diagnostics, local smoothing by the marginal layer remains one possible contributor, but the transport map can also compensate for that smoothing and may introduce locally adaptive corrections. With only a small training ensemble, coherent signed residuals can therefore also reflect local overfitting to the training fields or missing geographical and physical information. We do not interpret these residual patterns as evidence that a larger inducing set would solve the problem, since increasing $M$ did not improve log scores. Instead, reducing the remaining structured discrepancies may require stronger regularization of local corrections in the transport map or the inclusion of additional physical or geographical information via covariates in future extensions.

\begin{figure}[bt]
\centering
    \includegraphics[width=\linewidth, trim={0 0 0 20bp}, clip]{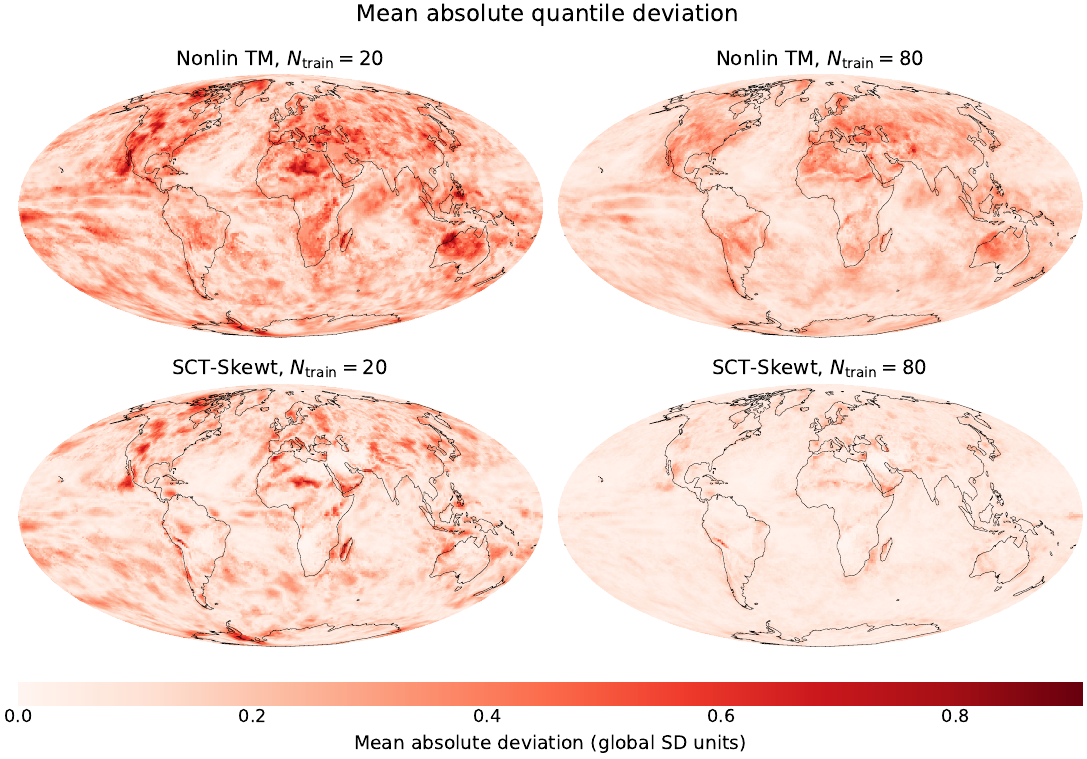}
    \caption{Spatially resolved marginal quantile discrepancies. At each location, the plotted value is the average absolute difference between empirical CESM quantiles and empirical model-sample quantiles over 20 equally spaced probability levels between $0.05$ and $0.95$, scaled by the global sample standard deviation of log precipitation rate. Smaller values indicate closer marginal fit.}
    \label{app-fig:marginal-quantile-diagnostic}
\end{figure}

Figure~\ref{app-fig:marginal-quantile-diagnostic-levels} shows signed location-wise marginal quantile differences at the probability levels $p=0.10$ and $p=0.95$. For a fixed probability level $p$, the diagnostic is
$(\hat q^{\text{model}}_{i,p}-\hat q^{\text{CESM}}_{i,p})/s_{\text{global}}$,
where $s_{\text{global}}$ is the global sample standard deviation of log precipitation rate, $\hat q^{\text{CESM}}_{i,p}$ is the empirical CESM quantile at location $i$, and $\hat q^{\text{model}}_{i,p}$ is the corresponding empirical quantile of generated model samples. Positive values indicate that the model-sample quantile is higher than the CESM quantile at that location, negative values indicate that it is lower, and values close to zero indicate close marginal agreement at the displayed quantile level.

The two displayed quantile levels were chosen to align with the threshold summaries in the main text. The comparison shows the same qualitative pattern as the averaged diagnostic: the \texttt{SCT-Skew-t} panels are closer to zero over broader regions than the corresponding \texttt{nonlin TM} panels, especially at matched training sizes, while the remaining signed differences retain spatial structure.

The signed maps show the direction of the remaining marginal discrepancies. At $p=0.10$, \texttt{nonlin TM} with $\Ntrain=20$ tends to underestimate the lower quantile in the Southern Hemisphere. \texttt{SCT-Skew-t} reduces this pattern, but with $\Ntrain=20$ it still shows coherent neighboring patches of underestimation and overestimation, especially in the South Atlantic and South Pacific. With $\Ntrain=80$, the \texttt{SCT-Skew-t} map is much closer to zero, apart from small localized residuals such as overestimation near the Himalayas. At $p=0.95$, \texttt{nonlin TM} systematically overestimates upper quantiles, while both \texttt{SCT-Skew-t} fits reduce this substantially. Remaining structure in the \texttt{SCT-Skew-t} maps includes underestimation around the northern Sahara and the Middle East and, for $\Ntrain=20$, overestimation in the equatorial Pacific. These residual patterns do not uniquely identify a mechanism: they may reflect residual marginal smoothing, local overfitting when the training ensemble is small, or missing information from geographical or physical structure.

\begin{figure}[htb]
    \centering
    \begin{minipage}{0.49\linewidth}
    \includegraphics[width=\linewidth]{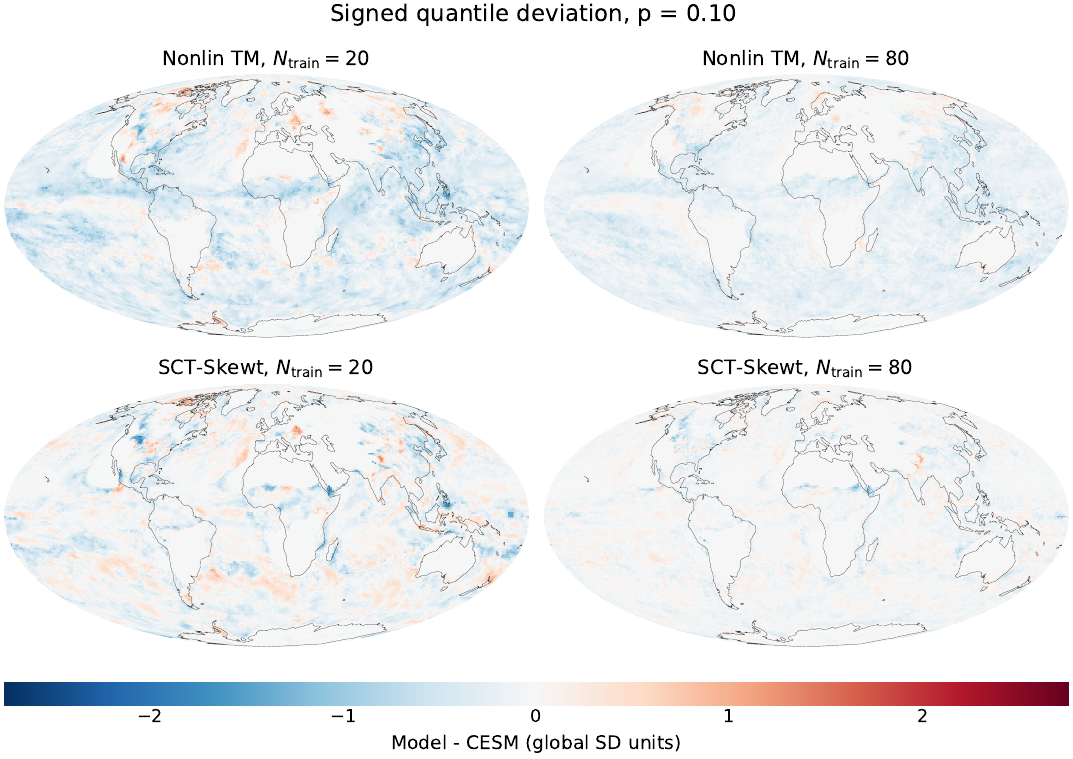}
    \end{minipage}
    \begin{minipage}{0.49\linewidth}
    \includegraphics[width=\linewidth]{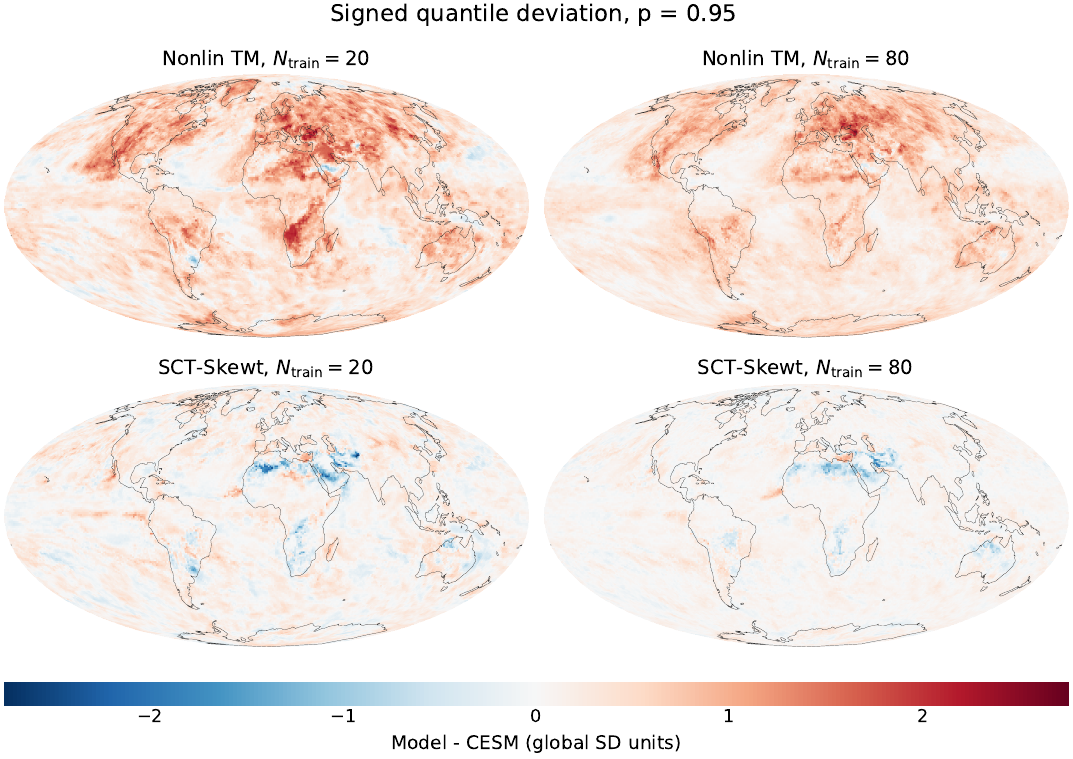}
    \end{minipage}
    
    \caption{Spatially resolved signed marginal quantile differences at $p=0.10$ (left) and $p=0.95$ (right). At each location, the plotted value is the empirical model-sample quantile minus the empirical CESM quantile, scaled by the global sample standard deviation of log precipitation rate. Positive values indicate that the model-sample quantile is higher than the CESM quantile, negative values indicate that it is lower, and values closer to zero indicate closer marginal fit.}
    \label{app-fig:marginal-quantile-diagnostic-levels}
\end{figure}

\FloatBarrier

\clearpage
\renewcommand\bibliographytypesize{\footnotesize}
\bibliography{references}


\end{document}